\newcommand{\ubar}[1]{\underline{#1}}
\newcommand{\prob}{\ensuremath{\operatorname{EPORA}}\xspace}
\newcommand{\tbf}[1]{\textbf{#1}\xspace}
\newcommand{\obja}{\tbf{Max-R}}
\newcommand{\objb}{\tbf{Max-A}}
\newcommand{\alga}{\ensuremath{\operatorname{SAMP}}\xspace}
\newcommand{\algb}{\ensuremath{\operatorname{SAMP-S}}\xspace}
\newcommand{\alggre}{\ensuremath{\operatorname{GREEDY}}\xspace}
\newcommand{\alguniform}{\ensuremath{\operatorname{UNIFORM}}\xspace}
\newcommand{\algrank}{\ensuremath{\operatorname{RANKING}}\xspace}
\newcounter{int}
\newcommand{\citen}[1] {\setcounter{int}{0}\@for\tmp:=#1\do{%
\ifnum \value{int}>0; \fi%
\setcounter{int}{1}%
\citeauthor{\tmp} \shortcite{\tmp}}}
\newcommand{\citenp}[1]{\setcounter{int}{0}\@for\tmp:=#1\do{%
\ifnum \value{int}>0; \fi%
\setcounter{int}{1}%
\citeauthor{\tmp}, \citeyear{\tmp}}}
\begin{document}

\title{Equity Promotion in Online Resource Allocation\thanks{A preliminary version will appear in the 36th AAAI Conference on Artificial Intelligence (AAAI 22).}}
\author {
    Pan Xu,\textsuperscript{\rm 1}
    Yifan Xu \textsuperscript{\rm 2}
}
\affiliations {
    \textsuperscript{\rm 1} Department of Computer Science, New Jersey Institute of Technology, Newark, USA\\
    \textsuperscript{\rm 2} Key Lab of CNII, MOE, Southeast University, Nanjing, China\\
    pxu@njit.edu,
    xyf@seu.edu.cn
}

\maketitle


\begin{abstract}
We consider online resource allocation under a typical \emph{non-profit} setting, where limited or even scarce resources are administered by a  not-for-profit  organization like a government. We focus on the internal-equity by assuming that arriving requesters are homogeneous in terms of their external factors like demands but heterogeneous for their internal attributes like demographics. Specifically, we associate each arriving requester with one or several groups based on their demographics (i.e., race, gender, and age), and we aim to design an equitable distributing strategy such that every group of requesters can receive a fair share of resources proportional to \emph{a preset target ratio}.   

We present two LP-based sampling algorithms and investigate them both theoretically (in terms of competitive-ratio analysis) and experimentally based on real COVID-19 vaccination data maintained by the Minnesota Department of Health. Both theoretical and numerical results show that our LP-based sampling strategies can effectively promote equity, especially when the arrival population is disproportionately represented, as observed in the early stage of the COVID-19 vaccine rollout.

\end{abstract}

\section{Introduction}
 We consider online resource allocation under a typical \emph{non-profit} setting, where limited or even scarce resources are administered by a  not-for-profit  organization like a government, and our priority is \emph{equity} such that every type of online agent could receive a fair share of the limited suppliers.  Currently, there are two complementary approaches to studying equity in online resource allocation under the non-profit setting. The first is called \emph{external equity}, which is to treat arriving requesters as homogeneous in terms of their internal attributes like demographics but heterogeneous with respect to their external factors like demands and arrival time. The focus of equity is then placed on fairness about external factors. A few works investigating external equity in distributing medical suppliers during the COVID-19 pandemic~\cite{Manshadi2021FairDR} and allocating food donations to different agencies~\cite{lien2014sequential}. Specifically, these studies associate each arriving request with a demand, and aim to devise an allocation policy such that every requester can get a fair share of resources proportional to its demand, regardless of their arrival time. The second follows an opposite direction, called \emph{internal equity}, which is to treat arriving requesters homogeneous in terms of their external factors like demands but heterogeneous for their internal attributes. Under this approach, each requester has been placed into one or several groups, and a distribution of resources is regarded as equitable when every group of requesters can receive a fair share of resources proportional to the percentage of the group in the whole population.

In this paper, we focus on internal-equity promotion in online resource allocation. A typical motivating example is COVID-19 vaccine distribution. There is much news showing stark racial disparities existing at least in the early stage of the vaccine rollout across the country~\cite{nyt-2,nyt-3,nyt-4,nyt-5,wp-1,wp-2,los-1,stat-1}. The article~\cite{nyt-3} published on Feb.\ 8, 2021, in The New York Times showed that ``In New Jersey, about 48 percent of vaccine recipients were White, and only 3 percent were Black, even though about 15 percent of the state's population is Black, according to state data.'' Another article~\cite{wp-1} published on Feb.\ 1, 2021, in The Washington Post reported that ``White Americans are being vaccinated at rates of up to three times higher than black Americans, though Black Americans have suffered a much higher death rate from COVID-19 than White Americans.'' {These facts highlight the challenge of achieving an equitable distribution of online resources in a non-profit setting, especially when resources are scarce and competed by an overwhelming number of requesters}.

One technical challenge in addressing (internal) equity promotion in online resource allocation is \emph{how to craft a proper metric of equity}. Current metrics of equity proposed so far are all based on the minimum ratio of the number of requesters served to that of total arrivals within any given group~\cite{ma2020group,nanda2020balancing,xu2020trade}. Unfortunately, these metrics fail to capture the exact equity we expect in practice. Consider COVID-19 vaccine distribution, for example. Among lots of news reporting the sharp racial and ethnic disparities in the early stage of rollout, a commonly cited evidence is the clear gap between the Black/Hispanic share of the vaccinated population and that of a general population or some specific group (like healthcare personnel or long-term care facility residents). This suggests that in an equitable distribution, we expect the \emph{inter-serving-ratio} of each group in some population as close as possible to a specific preset goal. In contrast, current metrics emphasize the \emph{intra-serving-ratio} within each group should be as close as possible to each other. Existing metrics all ignore that we typically have a specific target of serving ratio for each protected group, and the overall equity is often based on how small the gap it is between the serving ratio achieved and the target. In light of the above challenge, we propose a model, called  \emph{Equity Promotion in Online Resource Allocation} (\prob), which features the following three components. 

\xhdr{Graph}. There is a bipartite graph $G=(I,J,E)$, where $I$ and $J$ denote the set of types for (offline) supply agents and (online) 
demand agents. Each supply agent (of type) $i$ has a given serving capacity $b_i \in \mathbb{Z}^{+}$, which suggests that agent $i$ can serve up to $b_i$ requesters\footnote{Generally, the serving capacity of each supply agent $i$ can be estimated based on the inventory of resources stored at $i$ and the unit demand per requester.}; each demand agent (of type) $j$ features a certain demographical attributes (\eg race, gender, age, and ethnicity); an edge $e=(i,j)$ indicates the feasibility of supply agent (of type) $i$ to serve demand agent (of type) $j$ due to practical constraints\footnote{Examples includes spacial constraints when each supply and demand agent type features a certain location; see our experimental setup for COVID-19 vaccine distribution in Section~\ref{sec:exp}.}. We have a collection of protected groups $\cG=\{g\}$ (possibly overlapping), where each group $g \subseteq J$ represents a certain class of people. 

\xhdr{Arrival process of demand agents}. We assume each demand agent (of type) $j$ arrives following an independent Poisson process with homegeneous rate $\lam_j>0$, over a time horizon scaled to be $[0,1]$. Upon the arrival of a demand agent $j$, we (as an online algorithm or policy) have to decide immediately and irrevocably either to miss it or assign $j$ to a feasible supply agent $i$ with $(i,j) \in E$ with remaining serving capacity then. 

\xhdr{Equity metrics}. For each group $g \in \cG$, we have a target serving ratio $\mu_g \in (0,1)$ that is known as input.  Consider a given algorithm \ALG and for each demand agent $j$, let $X_j$ be the number of agent $j$ served by $\ALG$ and $A_j$ be that of total arrivals of agent $j$. By assumption, we have $\E[A_j]=\lam_j$. Let $X=\sum_{j \in J}X_j$ and $A=\sum_{j \in J} A_j$ denote the total number of all served demand agents and that of all arrives, respectively. Similarly, for each group $g$, let $X_g =\sum_{j \in g} X_j$ and $A_g =\sum_{j \in g} A_j$ denote the respective number of demand agent served and that of arrivals in group $g$. Consider the below two metrics of equity (which we aim to maximize):\\
 \tbf{Relative-serving ratio} (RSR): $\min_{g \in \cG} \frac{\E[X_g]}{\E[X] \cdot \mu_g}$;
\\
\tbf{Absolute-serving ratio} (ASR): $\min_{g \in \cG} \frac{\E[X_g]}{\E[A] \cdot \mu_g}$.

Note that both RSR and ASR model the gap between the serving ratio achieved the preset target as the relative percentage of the former in the latter. The key difference between the two is how to define the serving ratio achieved. In RSA, the ratio is against the expected number of all agents served ($\E[X]$), while in ASR, it is based on that of all arrivals ($\E[A]$).

\xhdr{Justification of parameters $\{\mu_g\}$}. Observe that
previous studies~\cite{ma2020group,nanda2020balancing,xu2020trade} have chosen a fairness metric as $\min_{g \in \cG} \E[X_g]/\E[A_g]$, the minimum ratio of the expected number of agents served to that of arrivals among all groups. This can be cast as a special case of ASR by setting each $\mu_g=\E[A_g]/\E[A]$. In other words, these previous works assume by default that the preset target serving ratio for each group is equal to its percentage in the arriving population. This  setting of $\{\mu_g\}$ makes sense when each group is perfectly represented in the arriving population. Unfortunately, this doesn't reflect the truth always: disproportionality is indeed observed in practical scenarios like COVID-19 vaccine distribution.~\citen{nyt-1} has reported that Black and Hispanic Americans face much more obstacles to vaccine access compared with their White counterparts and as a result, the arriving population requesting vaccines is dominated by White people (at least in the early rollout stage).

Let \obja and \objb denote the two objectives of maximization of RSR and ASR, respectively. There are several pros and cons associated with each of the above two objectives. Consider a context when  resources are limited or scarce and far outnumbered by online requesters. \tbf{Max-R} ensures only that \emph{the part of resources administered} are distributed according to the preset target ratios among all groups. However, it cannot even guarantee all resources will be used up. In fact, a simple policy $\ALG$ can fool \tbf{Max-R} by selecting a small portion of the arriving requesters to serve such that the served population has a perfect blend of all group members that are well aligned with the preset ratios. In that case, $\ALG$ will achieve an optimal value $1$ on \tbf{Max-R}. In contrast, \tbf{Max-A} will probably lead to an exhaustion of all resources. Additionally, \tbf{Max-A} will direct any policy \ALG to undo the potential bias in the arriving population as follows: for underrepresented groups (\ie $\E[A_g]/\E[A] <\mu_g$), \ALG will try to serve their arriving members as many as possible; for overrepresented groups (\ie $\E[A_g]/\E[A] >\mu_g$), \ALG will instead select only a fraction to serve. That being said, the ASR-based \tbf{Max-A} suffers from the disproportionality in the arriving population, which sets boundaries for the power that an optimal policy can exert on pushing the final ASR toward a preset target.

In light of the above discussions, we will focus on the objective of \tbf{Max-A} in this paper. Note that in our model, we assume $\cI=\{G=(I,J,E), \{b_i\}, \{\lam_j\}, \cG=\{g\}, \{\mu_g\}\}$ are all known as the input. We aim to design an algorithm \ALG such that the \tbf{Absolute-serving ratio} (ASR) is maximized.
%

\xhdr{Other related works}.  There are a few works that have considered online resource allocation in different contexts, see, \eg allocation of public housing for lower-income families~\cite{yair-diversity}, distribution of emergence aid for natural disasters like wildfires~\cite{wang2019measuring}, task assignment in crowdsourcing platform~\cite{DBLP:journals/ton/ChatterjeeBVV18}, online set selection with fairness and diversity constraints~\cite{DBLP:conf/edbt/StoyanovichYJ18}, online resource-allocation problems with limited choices in the long-chain design~\cite{DBLP:journals/mansci/AsadpourWZ20}, and income inequality among rideshare drivers, see, \eg~\cite{NEURIPS2019_3070e6ad,suhr2019two}.

The main model here is called Online Stochastic Matching (OSM) under the arrival setting of Known Identical Independent Distributions (KIID).\footnote{The arrival setting proposed here is slightly different from the standard KIID as studied before but is essentially equivalent; see~\cite{huang2021online}.} OSM was first introduced by~\citet{kvv} and the variant of OSM under KIID  is well-studied before; see \eg~\cite{feldman2009online,haeupler2011online, manshadi2012online, jaillet2013online}. Our model is particularly related to online-side vertex-weighted OSM under KIID where all edges incident to each online agent shares the same weight. In contrast, current works studying vertex-weighted OSM focus on the setting of offline-side vertex-weighted~\cite{ma2021fairness,brubach2016new,huang2018online,aggarwal2011online}, where all edges incident to each offline agent have the same weight. Most of the existing studies simply conducted a worst-case competitive-ratio analysis and gave a constant lower bound. In contrast, we investigate the worst-case more carefully and explicitly examine the different roles played by each external parameter (\eg $\ubar{b}$ and $s^*$) in the final performance.

\subsection{Preliminaries and main contributions}

  \xhdr{Competitive ratio}. The competitive ratio is a commonly-used metric to evaluate the performance of online algorithms. Consider a given algorithm $\ALG$ and a clairvoyant optimal \OPT. Note that $\ALG$ is subject to the real-time decision-making requirement, \ie $\ALG$ has to make an irrevocable decision upon every arrival of online demand agents before observing future arrivals. In contrast, $\OPT$ is exempt from that requirement:~\OPT enjoys the privilege of observing the full arrival sequence of online demand agents \emph{before} optimizing decisions.   Consider an instance $\cI$ of an online maximization problem as studied here. Let $\ALG(\mathcal{I})$ and $\OPT(\mathcal{I})$ denote the expected performance of $\ALG$ and $\OPT$ on $\cI$, respectively. We say $\ALG$ achieves a competitive ratio of $\rho \in [0,1]$ if $\ALG(\cI) \ge \rho \OPT(\cI)$ for all possible instances $\cI$. Essentially, the competitive ratio captures the gap between an online algorithm and a clairvoyant optimal due to the real-time decision-making requirement. 

  \xhdr{Main contributions}.
  Let  $g(s,b):=\E[\min(\Pois(b/s),b)]/b$ with $0 < s \le 1$ and $b \ge 1$, where $\Pois(b/s)$ denotes a Poisson random variable with mean $b/s$.  Here are some properties about $g(s,b)$. We defer the proofs of Lemma~\ref{lem:g} and Theorem~\ref{thm:main-3} to the appendix.

  \begin{lemma}\label{lem:g}
(P1) For any given $0< s \le 1$, $g(s,b)$ is increasing over $b \ge 1$; for any given $b \ge 1$, $g(s,b)$ is decreasing over $s\in (0,1]$. (P2) For any $b \ge 1$, $g(1,b) \ge \max(1-1/\sfe, 1-1/\sqrt{2\pi b})$. (3) For each given $b \ge 1$,  $\lim_{s \rightarrow 0^{+}}g(s,b)=1$.
  \end{lemma}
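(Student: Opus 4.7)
The plan is to normalize $g(s,b)$ into the form $g(s,b) = \E[\phi(\bar Y_b)]$, where $Y_1, Y_2, \ldots$ are i.i.d.\ $\Pois(1/s)$ variables, $\bar Y_b := (1/b)\sum_{i=1}^{b} Y_i$, and $\phi(x) := \min(x,1)$ is a monotone non-decreasing and concave function. This reformulation is valid because $\sum_{i=1}^{b} Y_i \sim \Pois(b/s)$. With this in hand, (P1) becomes transparent. Monotonicity in $s$ follows from the stochastic monotonicity of the Poisson family in its mean (as $s$ decreases, each $Y_i$ stochastically grows) combined with $\phi$ being non-decreasing. Monotonicity in $b$ follows from the leave-one-out identity
\[
\bar Y_b \;=\; \frac{1}{b}\sum_{i=1}^{b}\bar Y_b^{(-i)},\qquad \bar Y_b^{(-i)} := \frac{1}{b-1}\sum_{j\neq i} Y_j \stackrel{d}{=} \bar Y_{b-1},
\]
together with Jensen's inequality applied to the concave $\phi$: after taking expectations one obtains $\E[\phi(\bar Y_b)] \ge \E[\phi(\bar Y_{b-1})]$.

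For (P2), the bound $g(1,b) \ge 1-1/\sfe$ drops out of (P1) combined with the direct evaluation $g(1,1) = \E[\min(\Pois(1),1)] = P(\Pois(1) \ge 1) = 1-1/\sfe$. For the sharper bound $g(1,b) \ge 1-1/\sqrt{2\pi b}$, I would write $g(1,b) = 1 - \E[(b-X)^+]/b$ with $X \sim \Pois(b)$, use $\E[X]=b$ to rewrite $\E[(b-X)^+] = \E[(X-b)^+]$, and establish the closed-form identity $\E[(X-b)^+] = b\cdot P(X=b) = b^{b+1}\sfe^{-b}/b!$ by a short telescoping argument that exploits $k\cdot b^k/k! = b\cdot b^{k-1}/(k-1)!$. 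Stirling's lower bound $b! \ge \sqrt{2\pi b}(b/\sfe)^b$ then yields $\E[(b-X)^+]/b \le 1/\sqrt{2\pi b}$.

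For (P3), as $s \to 0^+$ the mean $b/s \to \infty$, so Chebyshev's inequality (or any standard Poisson tail bound) gives $P(\Pois(b/s) \ge b) \to 1$; since $\min(\Pois(b/s), b) \le b$ is uniformly bounded, bounded convergence implies $\E[\min(\Pois(b/s), b)] \to b$ and hence $g(s,b) \to 1$. The step I expect to require the most care is the closed-form Poisson identity used in the second bound of (P2); the other parts are straightforward applications of stochastic monotonicity, Jensen's inequality on a concave function of a symmetric average, and bounded convergence.
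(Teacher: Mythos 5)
Your proposal is correct, and the interesting comparison is part by part. For (P2) and (P3) you are essentially on the paper's path: the paper telescopes the sum $\E[\min(\Pois(b),b)]$ directly to obtain the exact value $g(1,b)=1-\sfe^{-b}b^{b}/b!$ and then invokes Stirling, which is the same computation as your identity $\E[(X-b)^{+}]=b\Pr[X=b]$ combined with $\E[(b-X)^{+}]=\E[(X-b)^{+}]$ (your positive-part packaging is just a cleaner way to organize the same telescoping); for the $1-1/\sfe$ part the paper checks directly that $\sfe^{-b}b^{b}/b!$ is decreasing in $b$, while you deduce it from (P1) together with $g(1,1)=1-1/\sfe$, and for (P3) the paper uses an explicit Poisson lower-tail bound where you use Chebyshev plus boundedness --- all equivalent. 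The genuine difference is (P1): the paper does not prove it at all, deferring to the proof of Theorem 1 in the cited work of Ma et al., whereas you give a self-contained argument --- stochastic monotonicity of $\Pois(b/s)$ in its mean for the decrease in $s$, and the leave-one-out identity $\overline{Y}_b=\frac{1}{b}\sum_i \overline{Y}_b^{(-i)}$ with concavity of $\min(\cdot,1)$ and Jensen for the increase in $b$. That argument is valid and a nice addition; the only caveat is that writing $\Pois(b/s)$ as a sum of $b$ i.i.d.\ $\Pois(1/s)$ variables requires $b$ to be an integer, so your proof yields monotonicity of $g(s,\cdot)$ only along integer $b$ --- which is all the paper ever needs, since $b$ there ranges over the integer capacities $b_i \ge \ubar{b}$.
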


First, we consider a general case and present an efficient algorithm with a competitive ratio lower bounded by $1-1/\sfe \sim 0.632$.
 
 \begin{theorem}\label{thm:main-1}
 There exists an LP-based sampling algorithm (\alga) that achieves an online-competitive ratio of $g(1,\ubar{b}) \ge \max(1-1/\sfe, 1-1/\sqrt{2\pi \ubar{b}})$, where $\ubar{b}=\min_{i\in I} b_i$.  \end{theorem}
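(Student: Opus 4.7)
The plan is to upper-bound \OPT via an LP relaxation and then analyze \alga by Poisson thinning. First I would introduce the benchmark LP with variables $x_{ij}\ge 0$ for each $(i,j)\in E$ and a scalar $\theta$: maximize $\theta$ subject to $\sum_{j}x_{ij}\le b_i$ for each $i\in I$, $\sum_{i}x_{ij}\le \lam_j$ for each $j\in J$, and $\sum_{j\in g}\sum_{i}x_{ij}\ge \theta\,\mu_g\sum_{j'}\lam_{j'}$ for each $g\in \cG$. For any online policy $\ALG$, setting $x_{ij}=\E[X^{\ALG}_{ij}]$ (the expected number of type-$j$ arrivals served by $i$ under $\ALG$) and $\theta=\mathrm{ASR}(\ALG)$ yields a feasible solution, so the LP optimum $\theta^*$ is an upper bound on \OPT.

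Next I would define \alga: solve the LP offline; upon each arrival of type $j$, independently sample a neighbor $i$ with probability $x_{ij}^*/\lam_j$ (idle with the residual probability), and serve if $i$ has remaining capacity. By Poisson thinning, for each $i\in I$ the arrivals sampling $i$ form a Poisson process on $[0,1]$ with constant intensity $\mu_i:=\sum_{j}x_{ij}^*\le b_i$, so the total count satisfies $N_i\sim \Pois(\mu_i)$. Integrating the instantaneous arrival-meets-capacity rate across $[0,1]$ gives
\[
\E[X_{ij}] \;=\; x_{ij}^* \int_0^1 \Pr[N_i(t) < b_i]\, dt \;=\; x_{ij}^* \cdot \frac{\E[\min(N_i, b_i)]}{\mu_i},
\]
where the second equality applies the same integral identity to the total match count at $i$, i.e.\ $\E[\min(N_i, b_i)] = \mu_i \int_0^1 \Pr[N_i(t) < b_i]\, dt$.

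The technical core is to show that $\psi(\mu):=\E[\min(\Pois(\mu),b)]/\mu$ is nonincreasing on $(0,b]$: this gives $\E[X_{ij}]\ge x_{ij}^*\,\psi(b_i)= x_{ij}^*\, g(1,b_i)\ge x_{ij}^*\, g(1,\ubar{b})$ using the monotonicity of $g(1,\cdot)$ from Lemma~\ref{lem:g}(P1). To prove the monotonicity of $\psi$, I would write $\psi(\mu) = 1 - f(\mu)/\mu$ with $f(\mu):=\E[(\Pois(\mu)-b)^+]$, and use the identity $f'(\mu)=\Pr[\Pois(\mu)\ge b]$ (obtained by reindexing the Poisson tail sum). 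Since $\Pr[\Pois(\mu)\ge b]$ is nondecreasing in $\mu$ by stochastic dominance of Poisson in its rate, $f$ is convex with $f(0)=0$, hence $f(\mu)/\mu$ is nondecreasing and $\psi$ is nonincreasing. Assembling the pieces, $\E[X_g] = \sum_{j\in g}\sum_i \E[X_{ij}] \ge g(1,\ubar{b})\sum_{j\in g}\sum_i x_{ij}^* \ge g(1,\ubar{b})\cdot \theta^*\cdot \mu_g\cdot \E[A]$, so $\mathrm{ASR}(\alga) \ge g(1,\ubar{b})\cdot \OPT$; Lemma~\ref{lem:g}(P2) then supplies the closed-form $\max(1-1/\sfe, 1-1/\sqrt{2\pi\ubar{b}})$ bound. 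The main obstacle is the monotonicity of $\psi$, since both numerator and denominator scale with $\mu$ and a direct ratio comparison is inconvenient; the convex-$f$-through-origin trick anchored by the derivative identity is the cleanest route.
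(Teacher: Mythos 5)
Your proposal is correct, and at the structural level it is the paper's proof: the same benchmark LP upper-bounding \OPT, the same sampling rule $x_{ij}/\lam_j$, the Poisson-thinning view of the load on each supply agent, the integral identity turning $\int_0^1\Pr[\cdot]\,dt$ into $\E[\min(\Pois(\cdot),b_i)]$, and the same final assembly through Constraint~\eqref{cons:g-1} together with Lemma~\ref{lem:g} (P1) for $g(1,b_i)\ge g(1,\ubar{b})$ and (P2) for the closed form. The one step you handle differently is the comparison at each supply agent $i$. The paper observes that the sampling load on $i$ has rate $\mu_i:=\sum_{j\in\cN_i}x_{ij}\le b_i$ and uses pointwise-in-$t$ stochastic dominance, $\Pr[\Pois(\mu_i t)<b_i]\ge\Pr[\Pois(b_i t)<b_i]$, which gives $\E[X_{ij}]\ge x_{ij}\,g(1,b_i)$ in one line. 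You instead keep the exact rate, write $\E[X_{ij}]=x_{ij}\,\E[\min(\Pois(\mu_i),b_i)]/\mu_i$, and prove that $\psi(\mu)=\E[\min(\Pois(\mu),b)]/\mu$ is nonincreasing via $\psi(\mu)=1-f(\mu)/\mu$ with $f(\mu)=\E[(\Pois(\mu)-b)^{+}]$ convex through the origin; the derivative identity $f'(\mu)=\Pr[\Pois(\mu)\ge b]$ and the convexity-through-origin argument are both sound, so this is a valid, self-contained substitute for the dominance step. It is heavier machinery than the paper needs, but it does buy a slightly sharper per-agent guarantee $\E[X_{ij}]\ge x_{ij}\,\psi(\mu_i)$ when $\mu_i<b_i$ (unused in the final bound, since the worst case $\mu_i=b_i$ drives the ratio), whereas the paper's route is the more economical way to reach the same $g(1,\ubar{b})$ guarantee.
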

 
Second, we consider a special case of homogeneous groups when each group consists of one single type of demand agents. In this context, we refer to types of demand agents and groups interchangeably, and use $\mu_j$ to denote the preset target of serving ratio for each group (or type) $j$. Let $\lam:=\sum_{j \in J} \lam_j$ be the total arrival rates and  $\ubar{\kap}:=\min_{j \in J} \lam_j/(\lam \cdot \mu_j)$, which denotes the degree of  disproportionality for the least represented type in the arriving population with respect to the target serving ratio. Generally, the higher $\ubar{\kap}$ is, the lower disproportionality the arriving population has (\ie every group is well represented in the arriving population). 

 \begin{theorem}\label{thm:main-2}
 There exists an LP-based sampling algorithm (\algb) that achieves an online-competitive ratio of $\ubar{\kap} \cdot g(s^*,\ubar{b})$, where $\ubar{\kap}=\min_{j \in J} \lam_j/(\lam \cdot \mu_j)$, $\ubar{b}=\min_{i\in I} b_i$, and $s^*$ is the optimal value to benchmark \LP~\eqref{obj-2}. 
 \end{theorem}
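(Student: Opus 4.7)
The plan is to prove Theorem~\ref{thm:main-2} via the LP-relaxation-plus-sampling paradigm, specialized to exploit the homogeneous-group structure. First, I would formulate the benchmark LP~\eqref{obj-2}, which has decision variables $\{x_{ij}\}_{(i,j)\in E}$ representing expected matches together with a scalar $s\in(0,1]$ capturing a fractional equity level; subject to capacity constraints $\sum_j x_{ij}\le b_i$, demand feasibility $\sum_i x_{ij}\le\lam_j$, and a fairness requirement that each type receives a serving proportional to its target $\lam\mu_j$, we maximize $s$. A hindsight argument---averaging \OPT's realized matching over the Poisson sample paths and identifying it with a fractional solution---shows that the LP optimum $s^{\star}$ upper-bounds $\OPT(\cI)$.

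Second, I would introduce \algb: upon arrival of an online agent of type $j$, sample a neighbor $i$ with probability $x^{\star}_{ij}/\lam_j$ (where $\{x^{\star}_{ij}\}$ denotes the LP optimizer) and match $j$ to $i$ if $i$'s remaining capacity is positive, otherwise miss $j$. The LP demand constraint ensures $\sum_i x^{\star}_{ij}/\lam_j\le 1$, so the sampling is well-defined. The key analytic step is Poisson thinning: the stream of proposals arriving at each supply agent $i$ is an independent Poisson process of rate $b'_i:=\sum_j x^{\star}_{ij}$, so the total matches at $i$ equal $\min(N_i,b_i)$ with $N_i\sim\Pois(b'_i)$. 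The LP is designed so that the load ratio $b'_i/b_i$ is calibrated to $1/s^{\star}$ in the worst case, which yields $\E[\min(N_i,b_i)]=b_i\cdot g(s^{\star},b_i)\ge b_i\cdot g(s^{\star},\ubar{b})$ by property~(P1) of Lemma~\ref{lem:g}.

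Third, by the exchangeability of the marked Poisson points, the expected matches at $i$ attributable to type $j$ equals $(x^{\star}_{ij}/b'_i)\cdot\E[\min(N_i,b_i)]=s^{\star}\cdot g(s^{\star},b_i)\cdot x^{\star}_{ij}$. Summing over $i$ and invoking the LP's per-type serving constraint---which scales with $\ubar{\kap}$ because the most under-represented type caps the allocation LP-wide---yields $\E[X_j]\ge\ubar{\kap}\cdot g(s^{\star},\ubar{b})\cdot s^{\star}\cdot\lam\mu_j$. Taking the minimum over $j$ gives $\ALG(\cI)\ge\ubar{\kap}\cdot g(s^{\star},\ubar{b})\cdot s^{\star}$, and dividing by $\OPT(\cI)\le s^{\star}$ delivers the claimed competitive ratio $\ubar{\kap}\cdot g(s^{\star},\ubar{b})$.

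The hardest step will be the LP design: we need $s^{\star}$ to simultaneously (i) upper-bound $\OPT(\cI)$, (ii) govern the per-$i$ load factor controlling the Poisson matching rate through $g$, and (iii) enter the per-type serving bound in tandem with $\ubar{\kap}$; the $\ubar{\kap}$-factor should emerge naturally because the least-represented type limits how aggressively the LP can equitably allocate supply across all groups. A secondary subtlety is that the per-$i$ matching-rate bound passes from the agent-specific $b_i$ to the uniform floor $\ubar{b}$ via monotonicity of $g$ in $b$ (property~P1), which is clean but must be applied carefully agent-by-agent.
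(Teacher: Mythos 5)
There is a genuine gap: the algorithm you analyze is not \algb but \alga. Your sampling rule is ``sample $i$ with probability $x^{*}_{ij}/\lam_j$,'' under which the proposal stream at supply agent $i$ has rate $b'_i=\sum_{j\in\cN_i}x^{*}_{ij}\le b_i$ by Constraint~\eqref{cons:i-2}, so the load ratio $b'_i/b_i$ is at most $1$ and the matching guarantee you can extract is $g(1,b_i)$ --- exactly Theorem~\ref{thm:main-1}'s bound, not $g(s^*,b_i)$. Your assertion that ``the LP is designed so that the load ratio is calibrated to $1/s^*$'' contradicts your own thinning computation: no choice of LP can make $\sum_j x^{*}_{ij}/\lam_j$-sampling produce a $\Pois(b_i/s^*)$ arrival stream when $s^*<1$. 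The whole point of \algb is to \emph{boost} the sampling probabilities by $1/s^*$, with a case split: for overrepresented types ($\kap_j>1$) sample $i$ with probability $x_{ij}/(\lam_j s^*)$, and for underrepresented types with probability $x_{ij}/(\lam\mu_j s^*)$. Validity of these boosted distributions is not automatic; it uses the WLOG normalization $x_j=s^*\lam\mu_j$ and Lemma~\ref{lem:mu} ($s^*\le\ubar{\kap}\le1$), as in Lemma~\ref{lem:S}. With the boosted rates, each $i$'s capacity is consumed at rate at most $b_i/s^*$, and the per-edge bound becomes $\E[X_{ij}]\ge \kap_j\, x_{ij}\, g(s^*,b_i)$ for underrepresented $j$ and $\E[X_{ij}]= x_{ij}\, g(s^*,b_i)$ for overrepresented $j$; since $\kap_j\ge\ubar{\kap}$ in the first case and $1\ge\ubar{\kap}$ in the second, both are at least $\ubar{\kap}\,x_{ij}\,g(s^*,\ubar{b})$ by Lemma~\ref{lem:g}. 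That case analysis is where the $\ubar{\kap}$ factor actually comes from; in your write-up it is only asserted to ``emerge naturally'' and never derived.

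Even granting your calibration claim, your third step does not produce the stated ratio: $(x^{*}_{ij}/b'_i)\cdot\E[\min(N_i,b_i)]$ with $b'_i=b_i/s^*$ and $\E[\min(N_i,b_i)]=b_i\,g(s^*,b_i)$ gives $s^*\,g(s^*,b_i)\,x^{*}_{ij}$, and summing with $x_j=s^*\lam\mu_j$ yields an ASR of order $(s^*)^2 g(s^*,\ubar{b})\lam\mu_j/(\lam\mu_j)$, i.e.\ a competitive ratio of $s^*\,g(s^*,\ubar{b})$, which by Lemma~\ref{lem:mu} is weaker than $\ubar{\kap}\,g(s^*,\ubar{b})$ and does not match the theorem. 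The upper-bound half of your argument (averaging \OPT over sample paths to show $\OPT\le s^*$, as in Lemma~\ref{lem:lp-2}) and the use of monotonicity of $g$ in $b$ are fine; the missing ideas are the $1/s^*$ boosting, the over/under-represented case split with its validity check, and the resulting derivation of the $\ubar{\kap}$ factor.
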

  \noindent\textbf{Remarks}. Let $B=\sum_{i\in I} b_i$ (the total serving capacity among all supply agents), and thus $B/\lam$ (the ratio of the total serving capacities to the total expected number of arrivals) represents the degree of resource scarcity.  We can show that $s^* \le B/\lam$ (See Lemma~\ref{lem:mu}) and therefore, in practical cases when resources are highly scarce, we have $s^* \le B/\lam  \ll 1$ and $g(s^*,\ubar{b}) \sim 1$ by Lemma~\ref{lem:g}. As such, \algb will outperform \alga when $\ubar{\kap}$ is relatively large (\ie the arriving population has a relatively low disproportionality overall).

  \begin{theorem}\label{thm:main-3}
The online-competitive analyses for \alga and \algb are both tight (\ie the ratios presented above are the best ones they can ever achieve). Also, no algorithm can achieve an online-competitive ratio better than $\sqrt{3}-1 \sim 0.732$ even for homogeneous groups.
   \end{theorem}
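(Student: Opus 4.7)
The plan is to treat the three claims of Theorem~\ref{thm:main-3} separately: tight instances for \alga and \algb, followed by a universal hard instance for arbitrary online algorithms under homogeneous groups.

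For tightness of \alga, I would use the canonical single-edge KIID instance: one offline type $i$ with capacity $b=\ubar{b}$, one online type $j$ with $\lam_j=b$, and a single group $g=\{j\}$ with $\mu_g=1$. The benchmark LP saturates at $b$, so the sampling probability in \alga becomes one, and its expected served count equals $\E[\min(\Pois(b),b)]=b\cdot g(1,b)$, whereas an offline clairvoyant \OPT saturates the unit capacity almost surely in the limit $\lam_j \to \infty$ (after suitable renormalization). The resulting ratio tends to $g(1,\ubar{b})$, matching Theorem~\ref{thm:main-1}.

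For tightness of \algb, I would add a second online type $j'$ to the same single-supply skeleton, whose rate and target satisfy $\lam_{j'}/(\lam\cdot\mu_{j'})=\ubar{\kap}$, while the other type dominates the arrivals so that $\lam_j/(\lam\cdot\mu_j)$ is much larger than $\ubar{\kap}$. The min-group bottleneck in $\min_{j\in J}$ then imposes the factor $\ubar{\kap}$, and the Poisson thinning induced by the LP-optimal sampling rate $s^*$ contributes $g(s^*,\ubar{b})$ exactly, recovering Theorem~\ref{thm:main-2}. Choosing the parameters so that $s^*$ and $\ubar{b}$ take the desired values is routine.

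For the universal $\sqrt{3}-1$ upper bound, I would construct a two-type homogeneous instance with a single offline supply of small capacity and carefully tuned arrival rates $(\lam_1,\lam_2)$ with equal targets $\mu_1=\mu_2=1/2$, so that one type is mildly overrepresented and the other underrepresented. Any online algorithm must, upon each arrival, commit or defer without seeing the future; its performance can be parameterized by a finite family of conditional accept probabilities, and the min-ratio objective becomes a piecewise-linear function thereof. Balancing the two groups and maximizing against an offline \OPT yields a quadratic equation whose positive root is $\sqrt{3}-1$, giving the upper bound on this explicit family.

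The main obstacle will be extending the bound from this parametric family to all history-dependent, randomized online algorithms. My plan is to invoke Yao's principle to reduce to deterministic policies on the constructed distribution, and then to use a symmetry/monotonicity exchange argument exploiting the Poisson arrival structure to collapse the optimal deterministic policy onto the two-parameter threshold family already handled. I expect this reduction to be the delicate step, and verifying it via a sample-path coupling will be the most technical part of the proof.
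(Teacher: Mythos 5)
The decisive gap is in your third claim, the $\sqrt{3}-1$ impossibility bound. Your proposed hard instance -- a single supply agent of small capacity with two demand types and targets $\mu_1=\mu_2=1/2$ -- does not contain the structural tension that forces any online algorithm below $\sqrt{3}-1$. With one supplier, an online algorithm faces only an accept/reject timing decision, and its gap to the clairvoyant comes solely from Poisson fluctuations in the realized counts; there is no allocation ambiguity, and you give no derivation of the asserted ``quadratic whose positive root is $\sqrt{3}-1$'' -- the conclusion is simply assumed. The paper's construction is essentially different: $n$ unit-capacity suppliers, $n$ ``rare'' types (rate $1/n$, each with a dedicated supplier) and one ``common'' type (rate $n-1$, adjacent to all suppliers), with homogeneous groups and $\mu_j=\lam_j/\lam$. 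The hardness comes from the fact that serving the flexible common type early irrevocably burns suppliers that not-yet-observed rare types will need, while waiting wastes common-type coverage. Parameterizing by the time $\tau$ at which the algorithm starts serving the common type (after noting that rejecting the common type is better done earlier, and averaging over any random supplier order), the rare types' ASR is at most $\tau+\tfrac12-\tfrac12\tau^2+o(1)$ and the common type's is at most $1-\tau$, and balancing at $\tau=2-\sqrt3$ gives $\sqrt3-1$ against $\OPT\ge 1-o(1)$. Your planned Yao/exchange reduction does not repair this, because the underlying two-type, one-supplier family almost certainly does not exhibit the claimed gap in the first place; you would need a many-supplier instance with dedicated versus flexible demand of exactly this kind.

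The tightness parts are closer but still shaky. For \alga, your instance as first stated ($\lam_j=b$, $\mu_g=1$) gives ratio $1$, since \OPT suffers the same Poisson truncation $\E[\min(\Pois(b),b)]$ as the algorithm; only the over-demanded limit $\lam_j\gg b$ (which you gesture at) yields ratio $g(1,\ubar{b})$. For \algb, declaring the parameter choices ``routine'' hides the real issue: with an underrepresented type of non-negligible target share, the capacity-consumption rate in \algb is strictly below $b_i/s^*$, so the $g(s^*,\ubar{b})$ and $\ubar{\kap}$ losses are not incurred simultaneously; one needs, e.g., the underrepresented group's target share to vanish, or simply the paper's route, which uses the single example above (where $\ubar{b}=s^*=\ubar{\kap}=1$, both algorithms reduce to the same policy achieving $1-1/\sfe$, and $\OPT\ge1-o(1)$) to certify tightness of both guarantees at once.
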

   
We test our model and algorithms on publicly available COVID-19 vaccination datasets maintained by the Minnesota Department of Health. Experimental results confirm our theoretical predictions and
demonstrate the power of our policies in navigating the distribution of limited resources toward the preset target ratios when compared against heuristics; see more details in Section~\ref{sec:exp}.

\section{An LP-based policy for the general case}
Throughout this paper, we use \OPT to denote both of a clairvoyant optimal policy and its corresponding performance. For each pair of supply-demand agents $e=(i,j)$, let $x_{ij}$ denote the expected number of times that $j$ is assigned to be served by $i$ in a clairvoyant optimal (unconditional of the number of arrivals of $j$).   Let $\cN_i$ ($\cN_j$) be the set of neighbors of agents with respect to $i$ ($j$) in the input graph $G=(I,J,E)$. Let $\lam=\sum_{j \in J} \lam_j$. Consider the below linear program (\LP). 
\begin{alignat}{2}
\max & ~~ s  &&  \label{obj-1} \\
 &x_j :=\sum_{i \in \cN_j} x_{ij} \le \lam_j  &&~~ \forall  j\in J   \label{cons:j-1} \\ 
  & x_g:=\sum_{j \in g}\sum_{i \in \cN_j} x_{ij} \ge s \cdot \lam \cdot \mu_g&& ~~ \forall  g\in \cG \label{cons:g-1} \\
 & x_i:=\sum_{j \in \cN_i} x_{ij} \le b_i  && ~~ \forall  i \in I \label{cons:i-1}\\
 & s, x_{ij} \ge 0 && ~~\forall (ij)\in E\label{cons:e}
\end{alignat}
\begin{lemma}\label{lem:lp-1}
$\OPT \le s^*$, where \OPT denotes the expected performance of a clairvoyant optimal and $s^*$ is the optimal value to \LP~\eqref{obj-1}.
\end{lemma}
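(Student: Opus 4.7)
The plan is the standard LP-benchmark argument: exhibit a feasible solution to \LP~\eqref{obj-1} whose objective value equals the expected ASR attained by a clairvoyant optimal, and then invoke LP-optimality of $s^*$. The LP was evidently designed with exactly these constraints in mind, so the natural candidate solution should drop straight out of \OPT's behavior on a random arrival sequence.

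Concretely, I would fix a clairvoyant optimal \OPT and interpret $x_{ij}$ as the expected number of type-$j$ arrivals that \OPT assigns to supply $i$, where the expectation is taken over the Poisson arrival process and any internal randomization of \OPT. Writing $s^{\OPT}:=\min_{g\in\cG}\E[X_g]/(\E[A]\cdot\mu_g)$ for the ASR achieved by \OPT, the goal is to verify that $(s^{\OPT},\{x_{ij}\})$ is feasible in \LP~\eqref{obj-1}; feasibility immediately yields $s^*\ge s^{\OPT}=\OPT$. The three constraint families I would check in turn are: Constraint~\eqref{cons:j-1} holds because each arrival of type $j$ can be matched at most once, so $X_j\le A_j$ sample-pathwise and hence $\sum_{i\in\cN_j}x_{ij}=\E[X_j]\le\E[A_j]=\lam_j$; Constraint~\eqref{cons:i-1} holds because supply $i$ has a hard serving capacity $b_i$, so $\sum_{j\in\cN_i}x_{ij}=\E[X_i]\le b_i$ sample-pathwise and thus in expectation; Constraint~\eqref{cons:g-1} follows from linearity of expectation, $x_g=\sum_{j\in g}\sum_{i\in\cN_j}x_{ij}=\E[X_g]$, combined with the defining inequality $\E[X_g]\ge s^{\OPT}\cdot\E[A]\cdot\mu_g=s^{\OPT}\cdot\lam\cdot\mu_g$, using $\E[A]=\sum_{j\in J}\lam_j=\lam$.

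I do not anticipate a real obstacle here. The only step deserving some care is translating the sample-path capacity bounds (single match per arrival, hard supply cap) into the two expectation inequalities~\eqref{cons:j-1} and~\eqref{cons:i-1}, but this is routine via linearity of expectation applied to deterministic indicator inequalities. Everything else is bookkeeping about how $X_g$, $A_g$, and the $x_{ij}$ assemble, plus the definition of $s^{\OPT}$.
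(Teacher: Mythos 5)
Your proposal is correct and matches the paper's argument: both treat $x_{ij}$ as the expected number of type-$j$ arrivals that the clairvoyant optimal assigns to supply $i$, verify that this (together with the achieved ASR as $s$) satisfies Constraints~\eqref{cons:j-1}--\eqref{cons:e}, and conclude $\OPT \le s^*$ by LP-optimality. Your write-up is in fact slightly more careful than the paper's, since you make explicit the sample-path inequalities behind the expectation bounds, but the route is the same.
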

\begin{proof}
Recall that our objective is \tbf{Max-A}, maximization of the absolute-serving ratio (ASR). 
Observe that the expected number of total served demand agents in each group $g$ is equal to $\E[X_g]=x_g$ while the expected number of total arrivals is $\E[A]=\lam$. Thus, by the definition of ASR, our objective can be expressed as $\max \min_{g \in \cG} x_g/(\lam \cdot \mu_g)$. By setting  $s= \min_{g \in \cG} x_g/(\lam \cdot \mu_g)$, we see Constraint~\eqref{cons:g-1} ensures $\max s$ is equivalent to the \tbf{Max-A}. In the rest, we try to justify all other constraints are valid for any clairvoyant optimal policies. Constraint~\eqref{cons:j-1} is valid due to the expected number of served agents $j$ (LHS) should be no larger than that of total arrives (RHS); Constraint~\eqref{cons:i-1} is valid since the expected number of agents served by $i$ (LHS) should be no larger than its capacity (RHS); the last constraint is straighforward. Thus, we claim that \LP~\eqref{obj-1} is maximizing the expected performance on ASR over all possible clairvoyant optimal policies.
\end{proof}

 For the ease of notation, we use $\{x_{ij}\}$ to denote an optimal solution of \LP~\eqref{obj-1} when the context is clear. Based on  $\{x_{ij}\}$, we propose such a simple algorithm (\alga) as stated in~Algorithm~\ref{alg:a}, that is to non-adaptively sample a neighbor $i \in \cN_j$ with probability $x_{ij}/\lam_j$ upon every arrival of $j$ and assign it whenever $i$ has remaining capacity. Observe that the sampling distribution is valid since $\sum_{i \in \cN_j} x_{ij}/\lam_j \le 1$ due to Constraint~\eqref{cons:j-1}.
 
\begin{proof}[Proof of the main Theorem~\ref{thm:main-1}]
 Consider a given supply agent $i$, we see that each time its capacity will be consumed at a rate of $ \sum_{j \in \cN_i} \lam_j \cdot (x_{ij} /\lam_j)=\sum_{j \in \cN_i} x_{ij} \le b_i$. This suggests that at any time $t$, the probability that $i$ has at least one remaining capacity should be at least $\Pr[\Pois(b_i t)<b_i]$. Let $X_{ij}$ be the (random) number of times that demand agent $j$ is served by $i$ in \alga. We have
 {
\begin{align}
\E[X_{ij}] &\ge \int_0^1 \lam_j \cdot \frac{x_{ij}}{\lam_j}\cdot \Pr[\Pois(b_i t)<b_i] dt\\
&=\frac{x_{ij} }{b_i} \int_0^1  \Pr[\Pois(b_i t)<b_i] \cdot b_i \cdot dt \label{ineq:pa}\\
&=x_{ij}\cdot  \frac{\E[\min(\Pois(b_i),b_i)]}{b_i}.
\end{align}
}
Observe that the integral part on the second line above \eqref{ineq:pa} essentially counts the number of arrivals in a Poisson process with an arrival rate $b_i$ whenever the total number of arrivals so far is less than $b_i$ and thus, it is equal to $\E[\min (\Pois(b_i),b_i)]$. For each group $g \in \cG$, let $X_g=\sum_{j \in g} \sum_{i \in \cN_j} X_{ij}$ denote the number of agents in group $g$ served in \alga.
{
\begin{align}
&\frac{\E[X_g]}{\E[A] \cdot \mu_g \cdot s^*} =\frac{\sum_{j \in g}\sum_{i \in \cN_j}\E[X_{ij}]}{\lam \cdot  \mu_g \cdot s^*}\\
& \ge \frac{\sum_{j \in g}\sum_{i \in \cN_j}x_{ij}}{\lam \cdot  \mu_g \cdot s^*}\cdot  \frac{\E[\min(\Pois(b_i),b_i)]}{b_i} \\
& \ge \frac{\E[\min(\Pois(b_i),b_i)]}{b_i} = g(1,b_i) \ge g(1,\ubar{b}). \label{ineq:a}
\end{align}
}
As for Line~\eqref{ineq:a}: the first inequality is due to Constraint~\eqref{cons:g-1}; the second inequality is due to (P1) in Lemma \ref{lem:g} (the function $g(s,b)$ is increasing on $b$) and $b_i \ge \ubar{b}$. Thus,  we have $ \min_{g \in \cG} \E[X_g]/(\E[A] \cdot \mu_g) \ge g(1,\ubar{b}) \cdot s^*$. By Lemma~\ref{lem:lp-1}, we establish the competitive ratio of \alga.
\end{proof}

\begin{algorithm}[t!]
\DontPrintSemicolon
\textbf{Offline Phase}: \;
Solve  \LP~\eqref{obj-1}, and let $\{x_{ij}\}$  be an optimal solution.\;
\textbf{Online Phase}:\;
Let a demand agent (of type) $j$ arrive.\;
{Sample a neighbor $i \in \cN_j$ with probability $x_{ij}/\lam_j$.}\;
\eIf{$i$ has remaining serving capacity,}{Assign $j$ to $i$;}{Reject $j$.}
\caption{An LP-based sampling  (\alga).}
\label{alg:a}
\end{algorithm}

 \section{A special case: homogenous groups}
Recall that for homogeneous groups, each group consists of one single type of demand agents, and $\mu_j$ denotes the preset target serving ratio for each type (or group) $j$. Generally, $\{\mu_j\}$ reflect the demographics of all types among a certain population (\eg the general population, or some special groups such as infected people). For this reason, we assume that $\mu:=\sum_{j \in J} \mu_j \ge 1$ (note that in case there are overlaps among types, the sum can exceed one). Similar to before, let $x_{ij}$ be the expected number of times that $j$ is served by $i$ in a clairvoyant optimal (\OPT). Observe that for homogeneous groups, we have each $g=\{j\}$ and we can verify that LP-\eqref{obj-1} then is reduced to the below.


\begin{alignat}{2}
\max & ~~ s  &&  \label{obj-2} \\
 &x_j :=\sum_{i \in \cN_j} x_{ij} \le \lam_j  &&~~ \forall  j\in J   \label{cons:j-2a} \\ 
  & x_j:=\sum_{i \in \cN_j} x_{ij} \ge s \cdot \lam \cdot \mu_j && ~~ \forall  j\in J\label{cons:j-2b} \\
 & x_i:=\sum_{j \in \cN_i} x_{ij} \le b_i  && ~~ \forall  i \in I \label{cons:i-2}\\
 & s, x_{ij} \ge 0 && ~~\forall (ij)\in E\label{cons:e-2}
\end{alignat}
Parellel to Lemma~\ref{lem:lp-1}, we have the below claim.
\begin{lemma}\label{lem:lp-2}
The optimal value of~\LP~\eqref{obj-2} serves as a valid upper bound for the expected performance of a clairvoyant optimal. 
\end{lemma}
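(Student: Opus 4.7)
The plan is to derive Lemma~\ref{lem:lp-2} as an immediate specialization of Lemma~\ref{lem:lp-1} to the homogeneous-group regime, rather than redo the analysis from scratch. First I would observe that under the homogeneous-group assumption every protected group has the form $g=\{j\}$ for some $j \in J$, so the collection $\cG$ is in bijection with (a subcollection of) $J$, with $\mu_g = \mu_j$ for the corresponding singleton. Under this identification, the equity constraint~\eqref{cons:g-1} of \LP~\eqref{obj-1},
\[
\sum_{j' \in g}\sum_{i \in \cN_{j'}} x_{ij'} \ge s \cdot \lam \cdot \mu_g,
\]
collapses into the per-type constraint $\sum_{i \in \cN_j} x_{ij} \ge s \cdot \lam \cdot \mu_j$, which is exactly~\eqref{cons:j-2b}. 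All other constraints---the arrival-rate bound~\eqref{cons:j-1}, the capacity bound~\eqref{cons:i-1}, and nonnegativity~\eqref{cons:e}---carry over verbatim to~\eqref{cons:j-2a}, \eqref{cons:i-2}, and~\eqref{cons:e-2}. Thus \LP~\eqref{obj-2} is nothing but the restriction of \LP~\eqref{obj-1} to this special case, and the inequality $\OPT \le s^*$ follows at once from Lemma~\ref{lem:lp-1}.

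If a self-contained derivation were preferred, the steps would mirror the proof of Lemma~\ref{lem:lp-1}: set $x_{ij}$ to be the expected number of times demand type $j$ is served by supply type $i$ under a clairvoyant \OPT, and verify each constraint in turn. Constraint~\eqref{cons:j-2a} is valid because one cannot serve more arrivals of type $j$ than actually arrive in expectation; \eqref{cons:i-2} is valid because $i$ has hard capacity $b_i$; and \eqref{cons:j-2b} is forced by choosing $s = \min_{j \in J} \E[X_j]/(\lam \cdot \mu_j)$, which is precisely the \objb value achieved by \OPT when groups are singletons. Maximizing $s$ over feasible $\{x_{ij}\}$ therefore yields an upper bound on $\OPT$.

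There is no real obstacle here; the only subtlety is bookkeeping. One must make sure that under the identification $g \leftrightarrow \{j\}$ the target parameters $\mu_g$ agree with the per-type $\mu_j$ used in \LP~\eqref{obj-2}, and that the possibility $\mu = \sum_j \mu_j \ge 1$ (arising when the groups in the original formulation overlap and collapse to the same type) does not affect the validity of the constraints: it merely reflects the fact that several per-type lower bounds now coexist, and \LP~\eqref{obj-2} simply takes the tightest one via the $\min$ in the objective. After this routine check, the lemma is an immediate corollary.
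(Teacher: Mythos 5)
Your proposal is correct and matches the paper's own (essentially unstated) argument: the paper simply notes that for homogeneous groups each $g=\{j\}$, so \LP~\eqref{obj-1} reduces to \LP~\eqref{obj-2}, and declares the lemma ``parallel to'' Lemma~\ref{lem:lp-1}, which is exactly your specialization argument. Your optional self-contained derivation also mirrors the proof of Lemma~\ref{lem:lp-1} faithfully, so there is nothing to add.
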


We present another LP-based sampling algorithm as follows. For the ease of notation, we use $\{x_{ij},s^*\}$ to denote an optimal solution to \LP~\eqref{obj-2}, and \emph{assume WLOG that $x_j=\sum_{j\in J}x_{ij}=s^* \cdot \lam \cdot \mu_j$ for each $j \in J$}. Note that we refer to agents $j$ with $\kap_j:=\lam_j/(\lam \cdot \mu_j)>1$ and $\kap_j<1$ as overrepresented and underrepresented, respectively.

  
 \begin{algorithm}[t!]
\DontPrintSemicolon

\textbf{Offline Phase}: 
Solve  \LP~\eqref{obj-2}, and let $\{x_{ij}, s^*\}$ be an optimal solution with each $x_j=\sum_{i\in \cN_j}x_{ij}=s^*\cdot\lam \cdot \mu_j $.\;

\textbf{Online Phase}:\;
Let a demand agent (of type) $j$ arrive.\;
\eIf{$j$ is an overpresented agent with $\kap_j:=\lam_j/(\lam \cdot \mu_j) >1$}
{Sample a neighbor $i \in \cN_j$ with probability $x_{ij}/(\lam_j \cdot s^*)$. Assign $j$ to $i$ if $i$ still can serve.}{Sample a neighbor $i \in \cN_j$ with probability $x_{ij}/(\lam \cdot \mu_j \cdot s^*)$.  Assign $j$ to $i$ if $i$ still can serve.}
\caption{An improved policy for homogeneous groups (\algb).}
\label{alg:b}
\end{algorithm}

Recall that $\ubar{\kap}=\min_{j\in J}\kap_j$, which denotes the degree of disproportionality of
 the least represented type in the arriving population, and $B=\sum_{i \in I} b_i$.  The following observations will be useful later.
 
\begin{lemma}\label{lem:mu}
(P1) $s^* \le \ubar{\kap} \le 1$; (P2) $s^* \le B/\lam$.
\end{lemma}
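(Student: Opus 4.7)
The plan is to derive both bounds directly from the LP constraints of~\eqref{obj-2}, leveraging the standing assumption $\mu:=\sum_{j\in J}\mu_j \ge 1$ made for homogeneous groups.

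For (P1), I would first show $\ubar{\kap}\le 1$ by contradiction. If $\ubar{\kap}>1$, then $\lam_j > \lam\cdot\mu_j$ for every $j\in J$; summing over $j$ gives $\lam=\sum_j \lam_j > \lam\sum_j \mu_j = \lam\cdot\mu \ge \lam$, a contradiction. To prove $s^* \le \ubar{\kap}$, I would combine Constraints~\eqref{cons:j-2a} and~\eqref{cons:j-2b} for each fixed $j$: at any feasible solution we have $s\cdot\lam\cdot\mu_j \le x_j \le \lam_j$, so $s \le \lam_j/(\lam\cdot\mu_j)=\kap_j$. Since this holds for every $j$, taking the minimum yields $s \le \ubar{\kap}$, and in particular $s^* \le \ubar{\kap}$.

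For (P2), the idea is to sum the constraints globally. Summing Constraint~\eqref{cons:j-2b} over $j\in J$ gives $\sum_j x_j \ge s\cdot\lam\cdot\sum_j \mu_j = s\cdot\lam\cdot\mu \ge s\cdot\lam$, using $\mu\ge 1$. On the other hand, swapping the order of summation and applying Constraint~\eqref{cons:i-2} yields $\sum_j x_j = \sum_j \sum_{i\in\cN_j} x_{ij} = \sum_i\sum_{j\in\cN_i}x_{ij} = \sum_i x_i \le \sum_i b_i = B$. Chaining these two inequalities gives $s\cdot\lam \le B$, hence $s^* \le B/\lam$.

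There is no real obstacle here beyond being careful about which direction each LP inequality goes and remembering that the homogeneous-group assumption $\mu\ge 1$ is doing the work in both the upper bound $\ubar{\kap}\le 1$ and in (P2); without it, the summation argument for (P2) would only give $s^* \le B/(\lam\cdot\mu)$, which could be weaker. Both bounds are essentially ``one-line'' consequences of feasibility, so the proof will be short.
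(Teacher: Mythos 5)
Your proof is correct and follows essentially the same route as the paper: both parts come from chaining the LP constraints, with the sum over $j$ and the assumption $\mu=\sum_{j\in J}\mu_j\ge 1$ doing the work for $\ubar{\kap}\le 1$ and for (P2). The only cosmetic difference is that you argue directly from Constraints~\eqref{cons:j-2a}--\eqref{cons:j-2b} (so $s\cdot\lam\cdot\mu_j\le x_j\le\lam_j$), whereas the paper invokes its WLOG normalization $x_j=s^*\cdot\lam\cdot\mu_j$; the two yield the same inequalities.
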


\begin{proof}
We show the first inequality in (P1). Observe that in the optimal solution, we assume WLOG  that $x_j=s^*\cdot \lam \cdot \mu_j$ for all $j \in J$. By Constraint~\eqref{cons:j-2a}, $x_j \le \lam_j$, and thus $s^*\cdot \lam \cdot \mu_j \le \lam_j$, which suggests that $s^* \le \lam_j/(\lam \cdot \mu_j)=\kap_j$ for each $j$. Therefore, $s^*\le \min_{j\in J} \kap_j=\ubar{\kap}$. We show the second one in (P1). By definition of $\ubar{\kap}$, we have $\lam_j \ge \lam \cdot \mu_j \cdot \ubar{\kap}$ for all $j \in J$. Thus, $\sum_{j \in J} \lam_j=\lam \ge  \sum_{j\in J} \lam \cdot \mu_j \cdot \ubar{\kap}=\lam \cdot \mu \cdot \ubar{\kap}$, which suggests that $\ubar{\kap} \le 1/\mu \le 1$ since $\mu \ge 1$.

Now we show (P2). Observe that in the optimal solution,
$$\sum_{j \in J} x_j=\sum_{j \in J}s^*\cdot \lam \cdot \mu_j =s^*\cdot \lam \cdot \mu=\sum_{i \in I} x_i \le \sum_{i \in I} {b_i}=B,$$
which implies that $s^* \le B/(\lam \cdot \mu) \le B/\lam$.
\end{proof}

\begin{lemma}\label{lem:S}
Sampling distributions in \algb are valid for all demand agents and they can be viewed as a boosted version of those in \alga.
\end{lemma}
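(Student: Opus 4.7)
The plan is to handle the two assertions in turn: first verify that the sampling probabilities defined in \algb genuinely form a sub-probability distribution for every online arrival $j$, and then compare them edge-by-edge against the \alga probabilities $x_{ij}/\lam_j$ to justify the term ``boosted''.

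For validity, the key input is the normalization $x_j=\sum_{i\in\cN_j}x_{ij}=s^*\cdot\lam\cdot\mu_j$ that was imposed WLOG in the offline phase of \algb. I would split on the type of arrival. If $j$ is overrepresented ($\kap_j>1$), summing $x_{ij}/(\lam_j\cdot s^*)$ over $i\in\cN_j$ gives $x_j/(\lam_j\cdot s^*)=\lam\cdot\mu_j/\lam_j=1/\kap_j<1$. If $j$ is underrepresented ($\kap_j\le 1$), summing $x_{ij}/(\lam\cdot\mu_j\cdot s^*)$ gives exactly $1$. In either case each individual term is dominated by the sum and therefore lies in $[0,1]$, so \algb is well-defined (with a residual probability $1-1/\kap_j$ of skipping the arrival in the overrepresented regime).

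For ``boosted'', I would divide the \algb sampling probability by the corresponding \alga probability $x_{ij}/\lam_j$ on the same edge $(i,j)$. For overrepresented $j$ the ratio collapses to $1/s^*$, and for underrepresented $j$ it collapses to $\lam_j/(\lam\cdot\mu_j\cdot s^*)=\kap_j/s^*$. I would then invoke Lemma~\ref{lem:mu}(P1), which gives $s^*\le\ubar{\kap}\le 1$: the overrepresented case yields $1/s^*\ge 1$ immediately, while the underrepresented case yields $\kap_j/s^*\ge\ubar{\kap}/s^*\ge 1$. Thus on every edge \algb allocates at least as much sampling mass as \alga, which is exactly the ``boosted'' interpretation.

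The only non-routine ingredient is the appeal to $s^*\le\ubar{\kap}$ in the underrepresented branch; without this inequality one could not rule out a boost factor strictly below one, and the interpretation would break. Since Lemma~\ref{lem:mu}(P1) already establishes the bound, everything else is straightforward bookkeeping on the normalization $x_j=s^*\cdot\lam\cdot\mu_j$.
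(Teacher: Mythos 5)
Your proposal is correct and follows essentially the same route as the paper: split on overrepresented versus underrepresented arrivals, use the normalization $x_j=s^*\cdot\lam\cdot\mu_j$ to verify the sums are at most $1$, and then compare edge-by-edge with $x_{ij}/\lam_j$. The only cosmetic difference is that for the underrepresented branch the paper invokes the constraint $x_j\le\lam_j$ directly (giving $\lam\cdot\mu_j\cdot s^*\le\lam_j$), whereas you route the same inequality through Lemma~\ref{lem:mu}(P1); these are equivalent.
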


\begin{proof}
We first show the validity. For each overrepresented agent $j$ with $\kap_j=\lam_j/(\lam \cdot \mu_j)>1$, we have 
\begin{align*}
\sum_{i \in \cN_j}x_{ij}/(\lam_j \cdot s^*) & =x_j/(\lam_j \cdot s^*) =(s^* \cdot \lam \cdot \mu_j)/(\lam_j \cdot s^*) \\ &=(\lam \cdot \mu_j)/\lam_j <1.
\end{align*}
 For each underrepresented agent $j$ with $\kap_j \le 1$, 
\[
\sum_{i \in \cN_j} x_{ij}/(\lam \cdot \mu_j \cdot s^*)=x_j/(\lam \cdot \mu_j \cdot s^*)=(s^* \cdot \lam \cdot \mu_j)/(\lam \cdot \mu_j \cdot s^*)=1.\]
For each overrepresented agent $j$, we have $x_{ij}/(\lam_j \cdot s^*) \ge x_{ij}/\lam_j$ since $s^* \le 1$. For each underrepresented agent $j$, note that $\lam \cdot \mu_j \cdot s^* = x_j  \le \lam_j $  for all $j \in J$. This suggests that $x_{ij}/(\lam \cdot \mu_j \cdot s^*) \ge x_{ij}/\lam_j$. Thus, we claim that sampling distribution for each demand agent in \algb can be treated as boosted compared with that in \alga.
\end{proof}

\begin{proof}[Proof of the main Theorem~\ref{thm:main-2}]
For each supply agent $i$, let $\cN_{i,H}$  ($\cN_{i,L}$) denote the set of neighbors of $i$ that are overrepresented (underrepresented). Observe that for each $i$, its capacity will be consumed in \algb with a rate of 
{
\[
\sum_{j \in \cN_{i,H}} \lam_j \cdot \frac{x_{ij}}{\lam_j \cdot s^*}+\sum_{j \in \cN_{i,L} }\lam_j \cdot \frac{x_{ij}}{\lam \cdot \mu_j \cdot s^*}   \le \frac{b_i}{s^*}.
\]
}
\tbf{Case 1}. Consider a given underrepresented type $j$, let $X_{ij}$ denote the (random) number of times that $j$ is served by $i$ in \algb. 
\begin{align*}
\E[X_{ij}] &=\int_0^1 \lam_j \cdot \frac{x_{ij}}{\lam \cdot \mu_j \cdot s^*}\cdot \Pr[\Pois(b_i t/s^*)<b_i] dt\\
&=\frac{\kap_j \cdot x_{ij} }{b_i} \int_0^1  \Pr[\Pois(b_i t/s^*)<b_i] \cdot \frac{b_i}{s^*} \cdot dt\\
&=\kap_j \cdot x_{ij}\cdot  \frac{\E[\min(\Pois(b_i/s^*),b_i)]}{b_i}\\
&=\kap_j \cdot x_{ij}\cdot g(s^*,b_i) \ge \ubar{\kap} \cdot x_{ij}\cdot g(s^*,\ubar{b})
\end{align*}
The last inequality is due to (P1) in Lemma \ref{lem:g} (the function $g(s,b)$ is increasing on $b$) and $b_i \ge \ubar{b}$. 

\tbf{Case 2}. Consider a given overrepresented type $j$ and let $X_{ij}$ be the number of times that $j$ is served by $i$.
\begin{align*}
\E[X_{ij}] &=\int_0^1 \lam_j \cdot \frac{x_{ij}}{\lam_j \cdot s^*}\cdot \Pr[\Pois(b_i t/s^*)<b_i] dt\\
&=\frac{x_{ij}}{b_i} \int_0^1  \Pr[\Pois(b_i t/s^*)<b_i] \cdot \frac{b_i}{s^*} \cdot dt\\
&=x_{ij}\cdot  \frac{\E[\min(\Pois(b_i/s^*),b_i)]}{b_i}\\
&= x_{ij}\cdot g(s^*,b_i) \ge  \ubar{\kap} \cdot x_{ij}\cdot g(s^*,\ubar{b})
\end{align*}
The last inequality is valid since $\ubar{\kap} \le 1$. Thus, summarizing the above two cases, we claim for any pair of supply-demand agent $(ij) \in E$, we have $\E[X_{ij}] \ge \ubar{\kap} \cdot x_{ij}\cdot g(s^*,\ubar{b})$. Let $X_j=\sum_{i \in \cN_j} X_{ij}$ denote the number of agents of $j$ served in \algb. 
\begin{align*}
&\frac{\E[X_j]}{\E[A]\cdot \mu_j \cdot s^*} =\frac{\E[X_j]}{\lam \cdot \mu_j \cdot s^*}
=\frac{\sum_{i\in \cN_i} \E[X_{ij}]}{x_j}\\
&\ge
\frac{\sum_{i \in \cN_j}\ubar{\kap} \cdot x_{ij}\cdot g(s^*,\ubar{b})}{x_j}=\ubar{\kap} \cdot g(s^*,\ubar{b}), \end{align*}
which suggests that\\ $\min_{j \in J} \E[X_j]/(\E[A] \cdot \mu_j) \ge (\ubar{\kap} \cdot g(s^*,\ubar{b}))\cdot s^*$. By Lemma~\ref{lem:lp-2}, we establish the competitive ratio of \algb.
\end{proof}

\begin{figure*}[t!]
    \centering
    \includegraphics[scale=0.75]{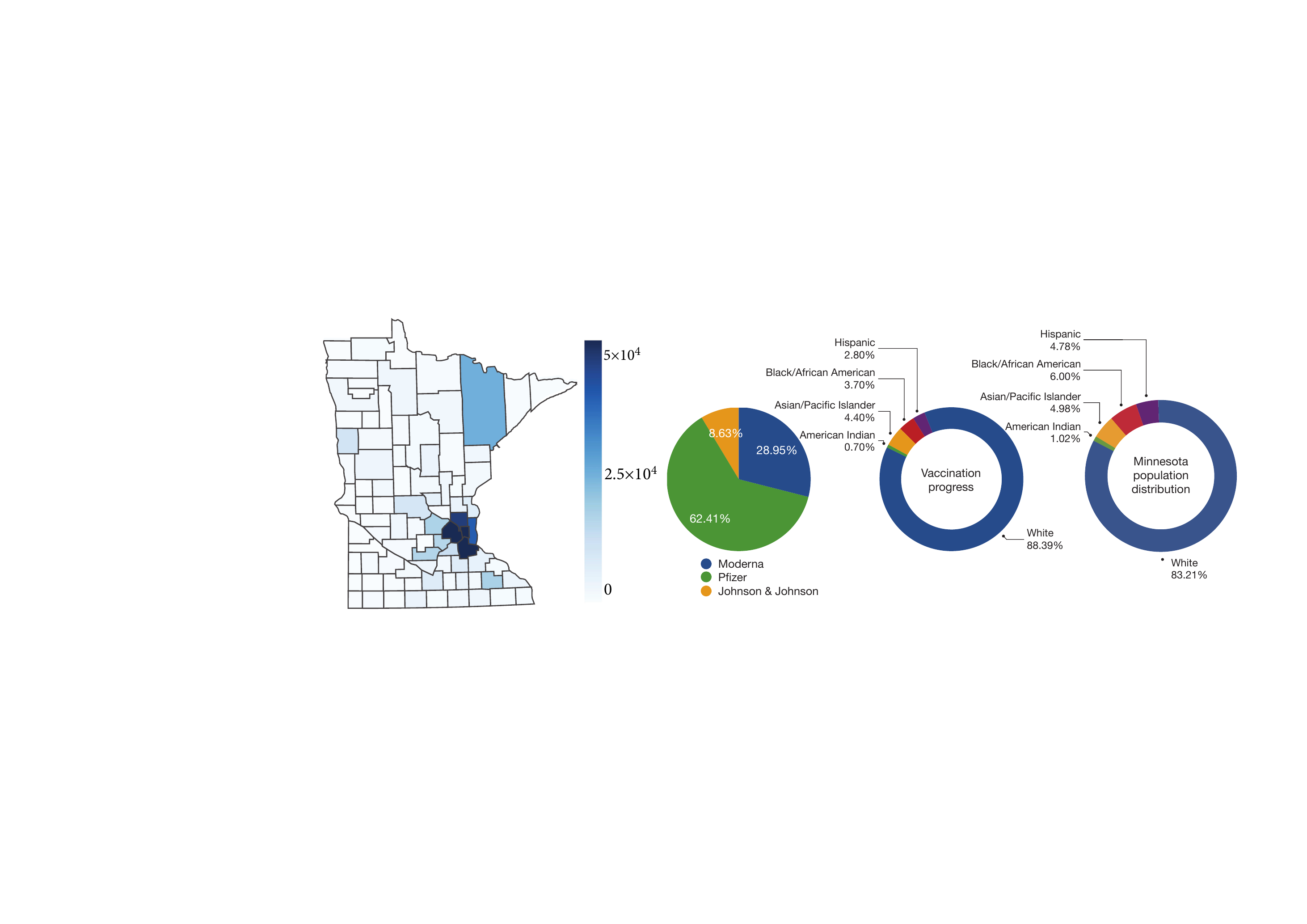}
    \caption{\small Overview of the COVID-19 vaccination datasets from May 7 to June 17, 2021, in Minnesota: the distribution on the numbers of people fully vaccinated over counties (first), the percentage of the three vaccines administered (second), the percentage of people fully vaccinated by race (third), and the distribution of the general population by race (fourth), respectively.}
    \label{fig:overview}
\end{figure*}

\section{Experimental results}\label{sec:exp}
Due to the space limit, we present only the experiemental results on the general case. We also conduct experiments on the special case of homogeneous groups and defer the results to the appendix.

\subsection{Experiments on the general case}
We use the publicly available COVID-19 vaccination datasets that are maintained by the Minnesota Department of Health\footnote{\url{https://mn.gov/covid19/vaccine/data/index.jsp}.}. 

\begin{figure*}[ht!]
  \begin{minipage}[t]{0.32\linewidth}
    \centering
    \includegraphics[scale=0.5]{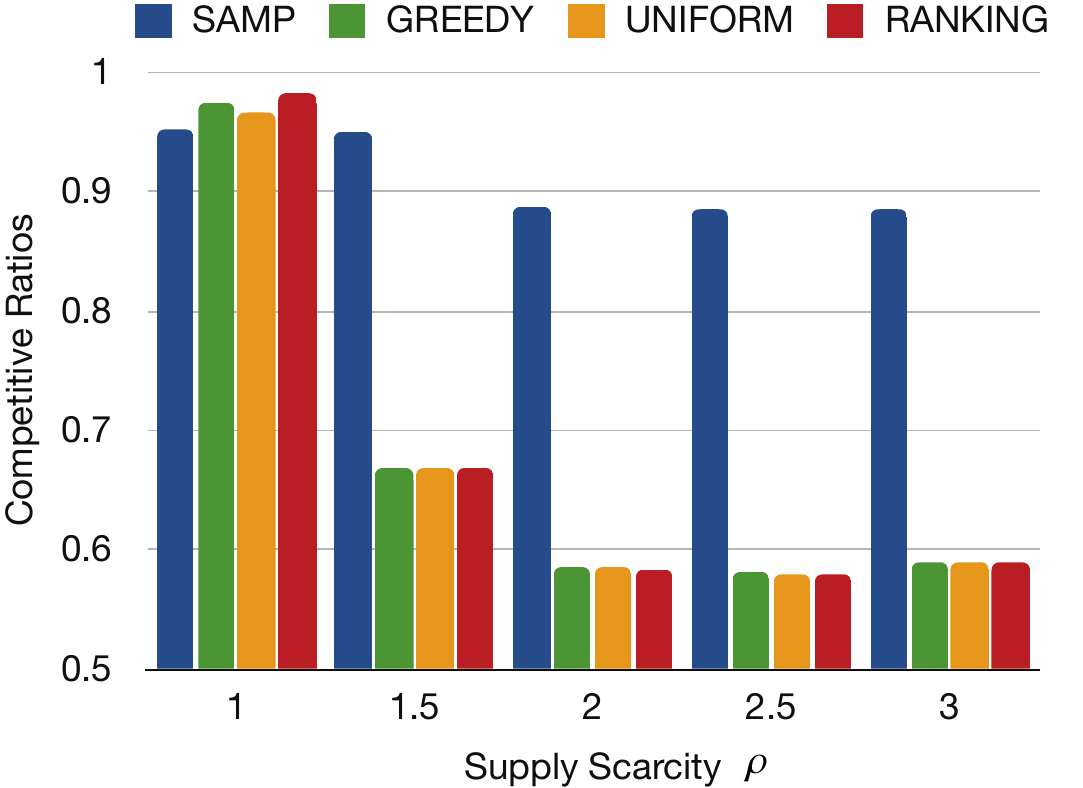}
    \caption{\small Competitive ratios achieved when the supply scarcity $\rho\in\{1,1.5,2,2.5,3\}$.}
    \label{fig:cr_scar}
  \end{minipage}%
  \hspace{3mm}
  \begin{minipage}[t]{0.32\linewidth}
    \centering
    \includegraphics[scale=0.5]{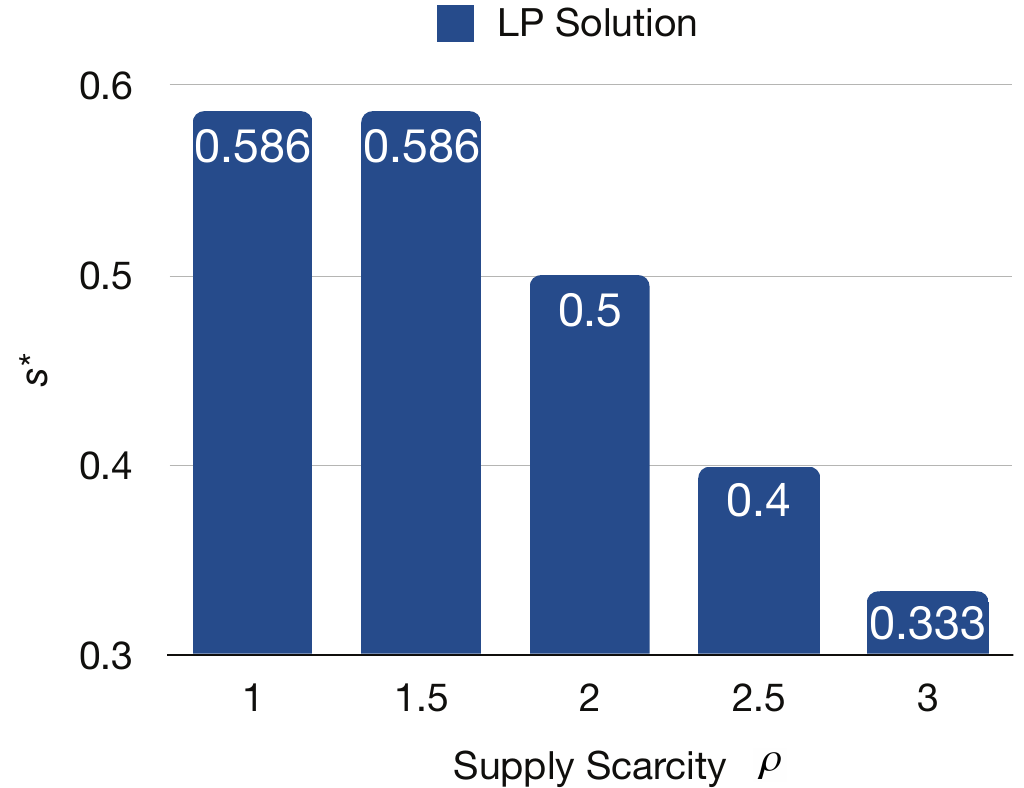}
    \caption{\small The optimal value $s^*$ of \LP~\eqref{obj-1} as the supply scarcity $\rho\in\{1,1.5,2,2.5,3\}$.}
    \label{fig:opt_scar}
  \end{minipage}
  \hspace{3mm}
  \begin{minipage}[t]{0.32\linewidth}
    \centering
    \includegraphics[scale=0.5]{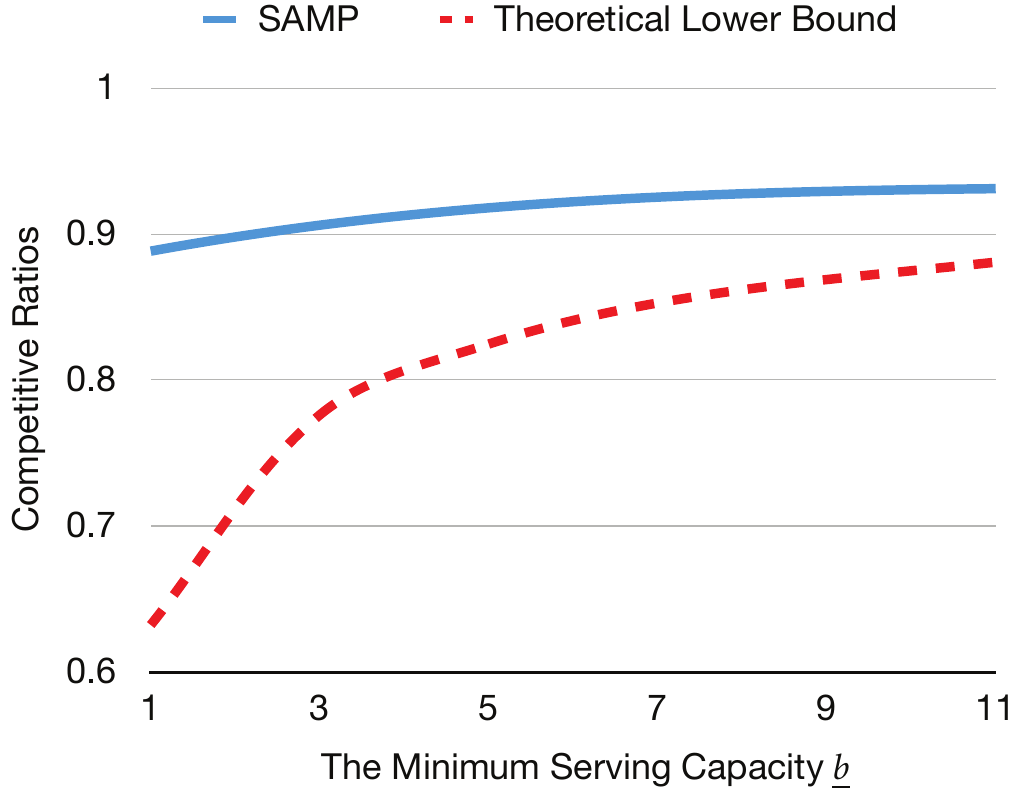}
    \caption{\small Competitive ratios achieved when the min serving capacity $\ubar{b} \in \{1,3,5,7,9,11\}$.}
    \label{fig:cr_b}
  \end{minipage}
\end{figure*}


\xhdr{Preprocessing of the COVID-19 vaccination datasets}. The datasets include the following information relevant to our experimental setup: (1) The total cumulative numbers of people  with completed vaccine series in any given county (updated weekly); (2) the cumulative numbers of doses of each vaccine product (Pfizer, Moderna, and Johnson \& Johnson) shipped to to local Minnesota providers and pharmacies in the C.D.C. partnership program;  (3) the weekly progress regarding the percentages of people with completed vaccine series in any age-race group, where age is among 15+, 15-44, 45-64, and 65+, and race has options of American Indian (AI), Asian/Pacific Islander (API), Black/African American (BAA), Hispanic (H), and White (W). We extract the data collected from (1), (2), and (3) over the time range from May 7 to June 17 in 2021, and plot them in Figure~\ref{fig:overview}. The first three figures in Figure~\ref{fig:overview} (from left to right) show the distribution on the numbers of people that get fully vaccinated over different counties, the percentage of the three vaccines administered, and the percentage of people getting fully vaccinated by race, respectively. We also map the distribution of the general population in Minnesota by race based on American Community Survey (ACS) 5-year estimates. 

Based on the above information, we construct the input instance as follows. Figure~\ref{fig:overview} shows that Minnesota is made up of $87$ counties. For each pair of county and vaccine type, we create a supply agent. We set the \emph{initial} serving capacity for  each supply agent being equal to the total number of doses of the specific vaccine administered in the county divided by the number of units needed per requester. To make it computationally tractable, we downsample each $b_i$ proportionally such that the 
total serving capacity $B=\sum_{i \in I} b_i=10^4$. For each county-race pair, we create a demand agent, and set the \emph{initial} arriving rate being equal to the percentage of the people getting fully vaccinated in that specific county and race. Motivated by the work~\cite{Manshadi2021FairDR}, we introduce a concept, called \emph{supply scarcity} (denoted by $\rho \ge 1$), which represents the ratio of the expected total of demand to that of  supply. We inflate the arriving rate $\lam_j$ of each demand agent $j$ by a factor of $B \times \rho$ such that the total arriving rates after scaling up will satisfy $\lam=\sum_{j \in J} \lambda_j=B \times \rho$. We add an edge between a demand agent and a supply agent if the two agents locate in the same or adjacent counties (the relative positives of all the $87$ counties are shown in Figure~\ref{fig:overview}). We create a group for each race, and set the preset goal  as the percentage in the general population as illustrated in the last plot of Figure~\ref{fig:overview}. The statistics of the baseline instance is summarized in Table~\ref{tab:sta}. 

\begin{table}[t!]
\scriptsize
\caption{\small Statistics of the Baseline Instance.}
\centering
\begin{tabular}{c c}
\hline \hline
Parameters & Settings \\ [0.5ex]
\hline
$I,J,E$ & $|I| = 87\cdot 3=261$, $|J| = 87\cdot 5=435$, $|E| = 8483$ \\ 
\hline
Serving capacities $\{b_i\}$ & Min : $1$, Max : $1822$, Mean : $38.3$ \\ \hline
Arriving rates $\{\lam_j\}$ & Min : $0.01$, Max : $2580$, Mean : $23$ \\ \hline
Preset goals $\{\mu_g\}$ & AI:$1.02\%$,API:$4.98\%$,BAA:$6\%$,H:$4.78\%$,W:$83.21\%$ \\
\hline
\end{tabular}
\label{tab:sta}
\end{table}




\begin{figure*}[ht!]
  \centering
  \begin{subfigure}[b]{0.31\linewidth}
    \includegraphics[width=\linewidth]{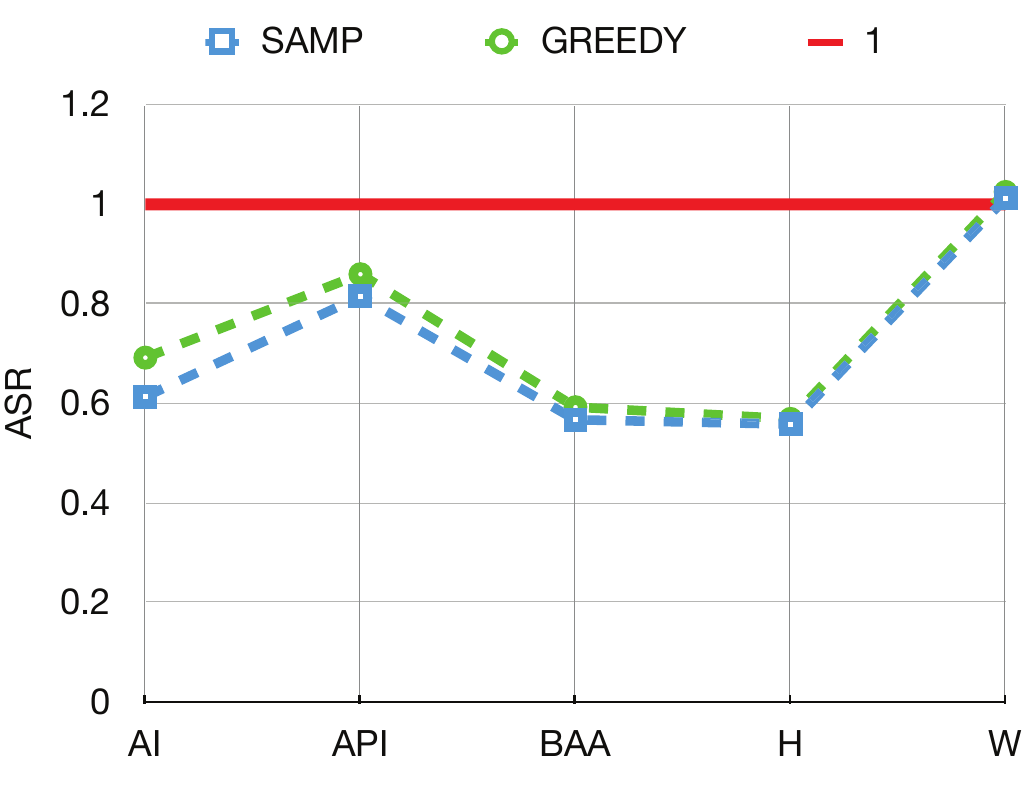}
    \caption{ASR achieved when $\rho = 1$.}
    \label{fig:asr_s1}
  \end{subfigure}
  \begin{subfigure}[b]{0.31\linewidth}
    \includegraphics[width=\linewidth]{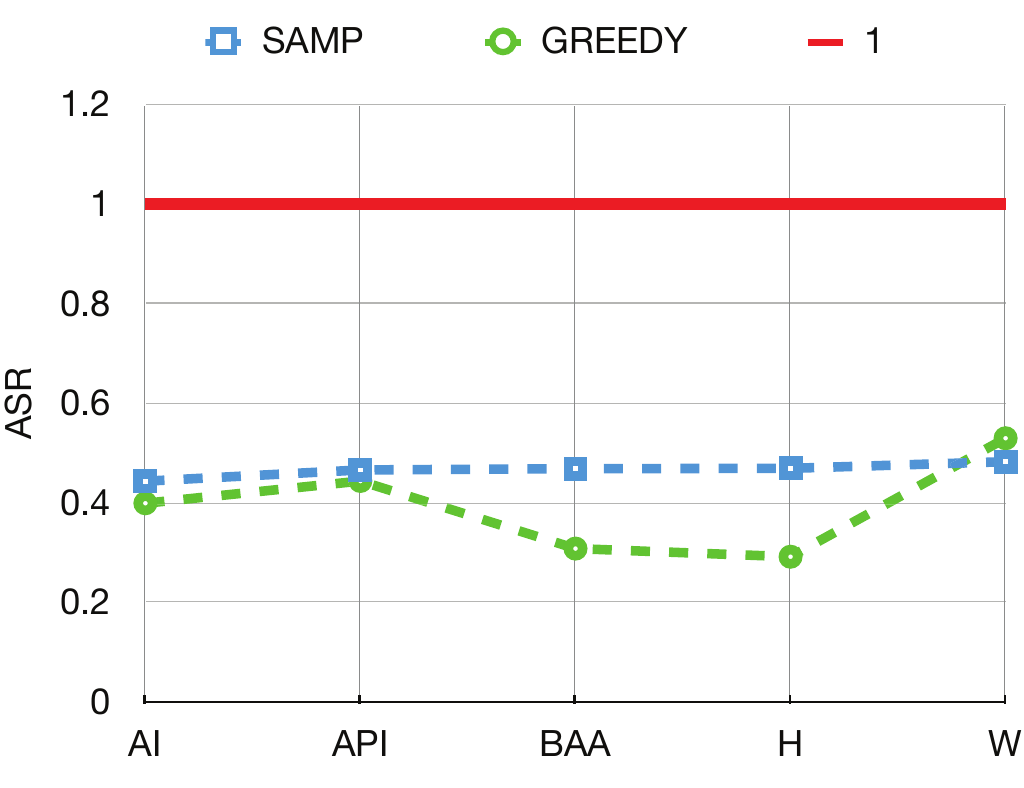}
    \caption{ASR achieved when $\rho = 2$.}
    \label{fig:asr_s2}
  \end{subfigure}
  \begin{subfigure}[b]{0.31\linewidth}
    \includegraphics[width=\linewidth]{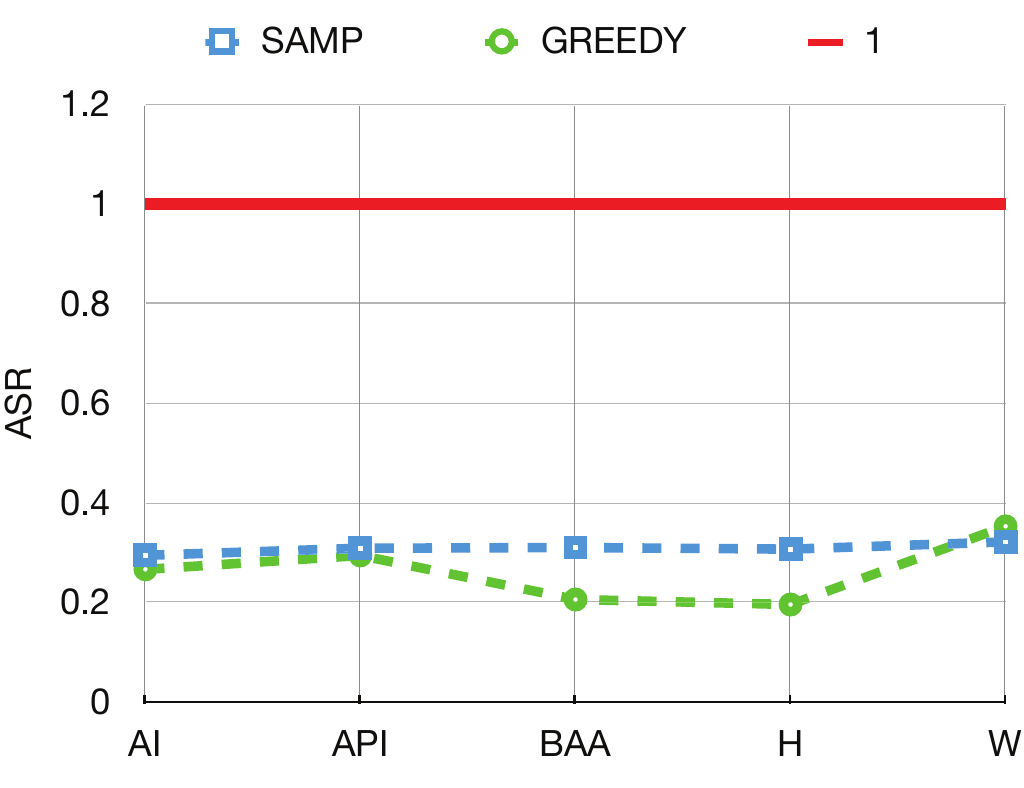}
    \caption{ASR achieved when $\rho = 3$.}
    \label{fig:asr_s3}
  \end{subfigure}
  \vspace{-2mm}
  \caption{ASR achieved for different races, \ie American Indian (AI), Asian/Pacific Islander (API), Black/African American (BAA), Hispanic (H), and White (W)), when the supply scarcity $\rho \in \{1,2,3\}$.}
  \label{fig:asr}
\end{figure*} 

\begin{figure*}[ht!]
  \centering
  \begin{subfigure}[b]{0.31\linewidth}
    \includegraphics[width=\linewidth]{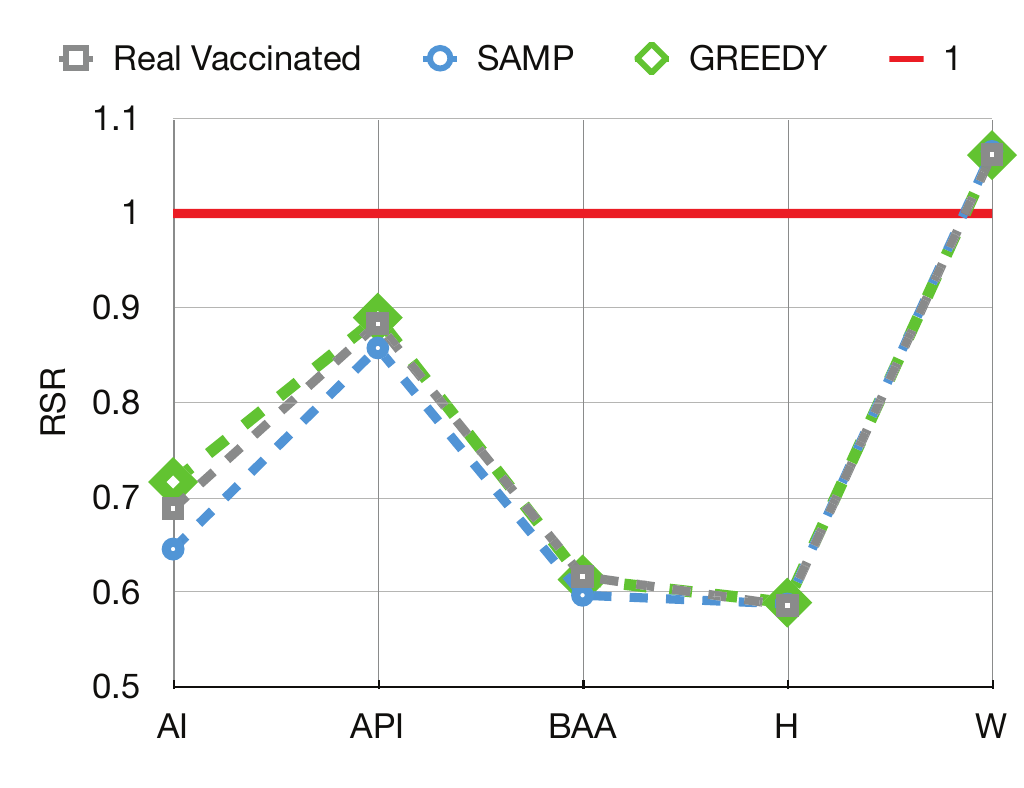}
    \caption{RSR achieved when $\rho = 1$.}
    \label{fig:rsr_s1}
  \end{subfigure}
  \begin{subfigure}[b]{0.31\linewidth}
    \includegraphics[width=\linewidth]{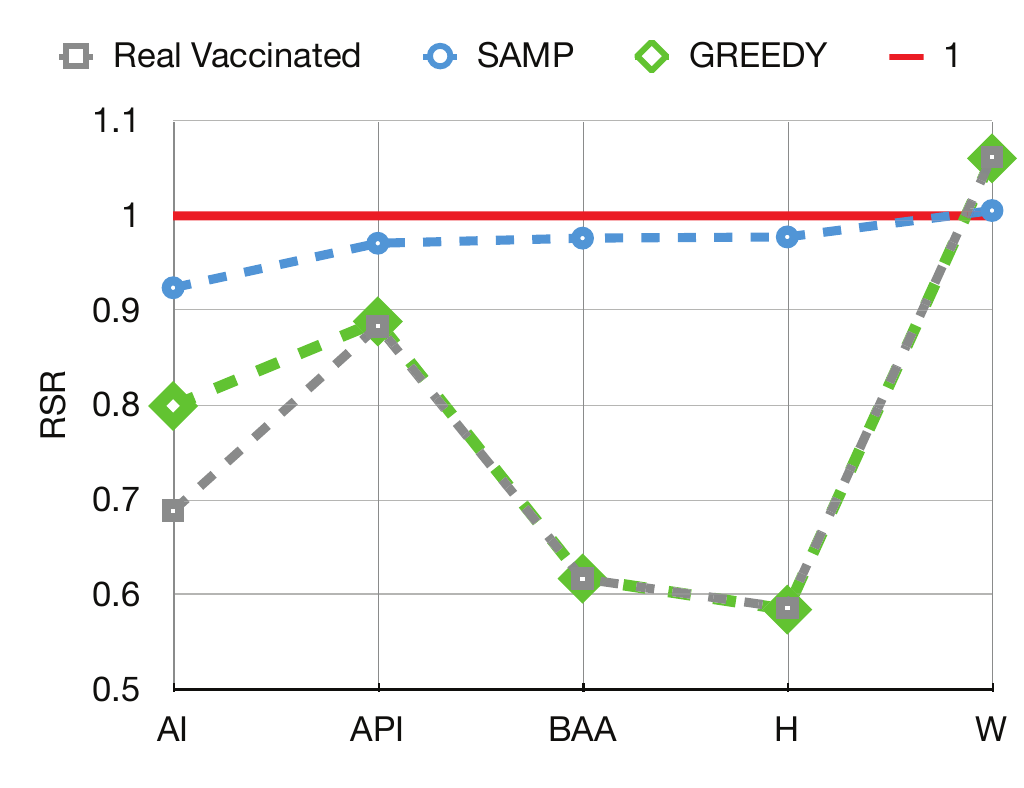}
    \caption{RSR achieved when $\rho = 2$.}
    \label{fig:rsr_s2}
  \end{subfigure}
  \begin{subfigure}[b]{0.31\linewidth}
    \includegraphics[width=\linewidth]{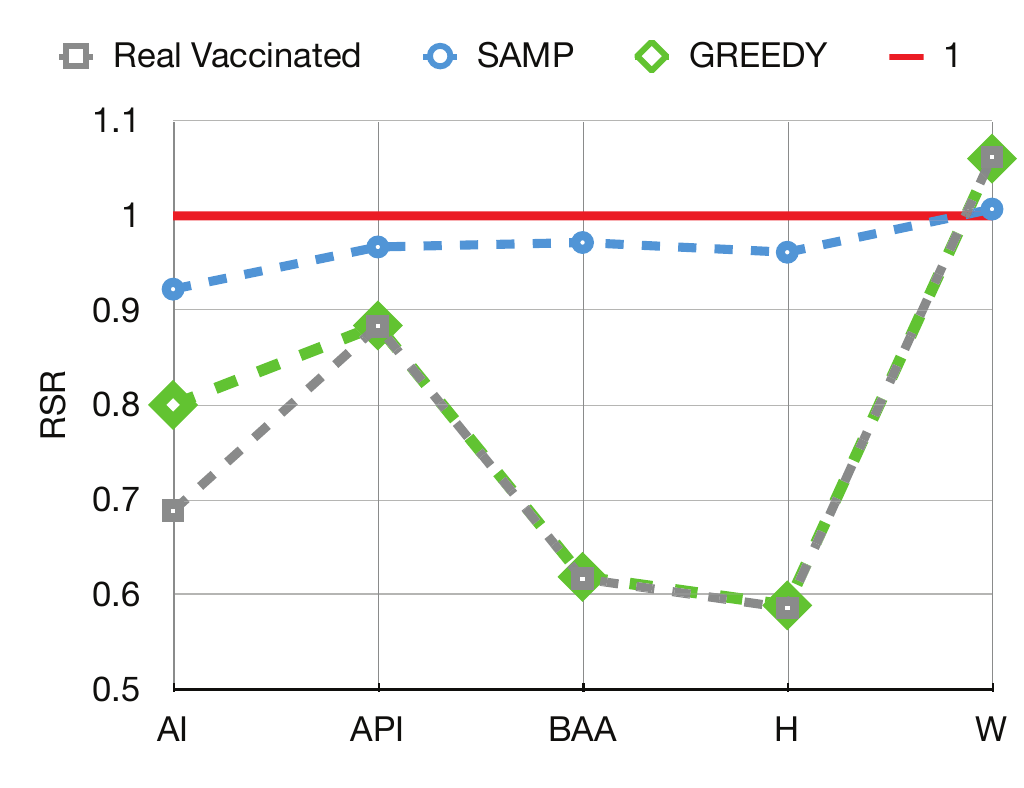}
    \caption{RSR achieved when $\rho = 3$.}
    \label{fig:rsr_s3}
  \end{subfigure}
  \vspace{-2mm}
  \caption{RSR achieved for different races when the supply scarcity $\rho \in \{1,2,3\}$.}
  \label{fig:rsr}
\end{figure*} 
\xhdr{Algorithms}. In additional to the two LP-based algorithms \alga and \algb proposed in this paper, we also implement several heuristic baselines. Suppose a demand agent $j$ arrives at some time $t$ and let $\cN_{j,t}$ be the set of neighboring supply agents to $j$ that still have at least one remaining serving capacity at $t$.  If $\cN_{j,t}$ is empty, we have to reject $j$; otherwise,
 (a) \alggre: assign $j$ to the agent $i$ that has the largest serving capacity in $\cN_{j,t}$  (and break the ties arbitrarily); (b) \alguniform: select an agent uniformly at random in $\cN_{j,t}$; (c)  \algrank: select the agent from $\cN_{j,t}$ that has the largest order following a pre-selected random order on $I$. 
 
 

\xhdr{Results and discussions}.
{We test different settings when the supply scarcity $\rho$ takes values in $\{1, 1.5, \textbf{2}, 2.5, 3\}$ while the minimum serving capacity is fixed at $\ubar{b}=1$.} For each setting, we run all algorithms for $100$ times and take the average as the final performance. The competitive ratios are computed by comparing the averaged performance to the optimal value of \LP~\eqref{obj-1}. We also consider another setting when we fix $\rho=2$ while vary the minimum serving capacity $\ubar{b}\in \{\textbf{1}, 3, 5, 7, 9, 11\}$, respectively. Here, we scale up $\ubar{b}$ by simply filtering out those supply agents with a serving capacity less than $\ubar{b}$.

Figure~\ref{fig:cr_scar} shows that in a typical over-demanded settings ($\rho > 1$), \alga outperforms all other benchmarks. This highlights the global-optimization power of \alga when supply falls short of demand. Figure~\ref{fig:cr_b} shows that \alga always stay above its theoretical lower bound over different choices of $\ubar{b}$, which confirms our theoretical prediction as suggested by Theorem~\ref{thm:main-1}. To better demonstrate the advantage of \alga, we visualize two metrics of equality, \ie the absolute-serving ratio (ASR) and relative-serving ratio (RSR), in Figure~\ref{fig:asr} and Figure~\ref{fig:rsr}, respectively. Since the performances of \algrank and \alguniform are both similar to \alggre, we omit them in Figure~\ref{fig:asr} and Figure~\ref{fig:rsr} for presentation convenience.

Figure~\ref{fig:asr} shows that in over-demanded settings ($\rho = 2$ and $\rho = 3$), \alga demonstrates its power in pushing all races' ASR toward the preset targets, but not for the case when supply roughly meets demand ($\rho=1$). This is because in the latter case ($\rho=1$), performances of both \alga and the clairvoyant (\OPT) are both bottlenecked by the degree of disproportionality of the least represented race, which is Hispanic with $\ubar{\kap}=0.586$ in our context. Thus, \alga have no motivation to push the ASR for any races.  The bottleneck can be seen from Figure~\ref{fig:opt_scar}: when $\rho \le 1.5$, the optimal value of \LP~\eqref{obj-1} stays at the value $\ubar{\kap}=0.586$; while as $\rho$ increases to $2$, the bottleneck of \OPT shifts from $\ubar{\kap}=0.583$ to $1/\rho \le 0.5$. In the latter case, \alga can leverage the solution of the benchmark LP~\eqref{obj-1} by serving the underrepresented arrivals as many as possible, while selecting only a fraction to serve for those overrepresented.  Figure~\ref{fig:rsr} shows that under the metric of RSR, \gre's performance aligns well with the real-world case. 
This is mainly due to the fact that heuristic-based strategies like Greedy have been widely adopted in practical situations.  Figure~\ref{fig:rsr} also suggests that though \alga is not designed to promote RSR, it can have an excellent performance on RSR (almost equal to $1$) in over-demanded settings. This further highlights the superiority of \alga.

\section{Conclusion and future directions}
This paper presents two LP-based strategies that can effectively navigate the distribution of the limited resources toward preset target ratios. Theoretical and numerical results highlight the power of our policies in reducing the existing biases in the arriving population, which are commonly observed in resource allocation when demands far outnumber supplies. Observe that the current two metrics (ASR and RSR) both model the gap between the serving ratio achieved the preset target as the ratio of the former to the latter. How about replacing the relative-ratio-based gap with the absolute-difference-based gap? 

\section*{Acknowledgments}
Pan Xu was partially supported by NSF CRII Award IIS-1948157. The authors would like to thank the anonymous reviewers for their helpful feedback.

\bibliography{vaccine}

\onecolumn

\appendix

\section{Proof of Lemma~\ref{lem:g}}
\begin{proof}
(P1) is due to the work~\cite{ma2021dynamic} (See the proof of Theorem 1 there). 
Next, we show (P2).  Note that when $s=1$, we have
 \begin{align*}
 g(1,b) &=\E[\min(\Pois(b),b)]/b=1-\Pr[\Pois(b)<b]+\sum_{k=1}^{b-1}\sfe^{-b} b^{k-1}/(k-1)!\\
 &=1-\Pr[\Pois(b)<b]+\Pr[\Pois(b) \le b-2]=1-\sfe^{-b}b^{b-1}/(b-1)!=1-\sfe^{-b}b^{b}/b! \ge 1-\frac{1}{\sqrt{2\pi b}}.
 \end{align*}
 The last inequality above is due to Stirling's formula. We can verify that $1-\sfe^{-b}b^{b}/b!$ gets minimized at $b=1$ with a value equal to $1-1/\sfe$. Thus, we claim that $g(1,b) \ge \max (1-1/\sfe,  1-\frac{1}{\sqrt{2\pi b}})$. 
Now we show (P3). Consider a given $0<s <1$,  we see that
\[
g(s,b)=\frac{\E[\min(\Pois(b/s),b)]}{b} \ge \Pr[\Pois(b/s) \ge b]=1- \Pr[\Pois(b/s) < b]\ge 1-\exp\Big(-\frac{(b/s) \cdot (1-s)^2}{2}\Big),
\]
where the last inequality above follows from the lower-tail bound of a Poisson random variable due to~\cite{pois-tail}. Thus, we claim that for any given $b \ge 1$, $g(s,b)$ will approach $1$ when $s \rightarrow 0^{+}$. 
\end{proof} 

\section{Proof of Main Theorem~\ref{thm:main-3}}

\begin{example}\label{exam}
Consider an instance as follows. There are $|I|=n$ supply agents each has a unit serving capacity $b_i=1$ for $i \in I$; there are $|J|=n+1$ demand agent types, which consists of $n$ ``rare'' types with $\lam_j=1/n$ for $j \in [n]:=\{1,2,\ldots,n\}$ and a single ``common'' type with $\lam_j=n-1$ for $j=0$. For each rare type $j \in [n]$, it can be served by a supply agent $i=j$ only; for the common type $j=0$, it can be served by all the $n$ supply agents.

Set $\cG=J$ with each $g=\{j\}$, \ie we consider homogeneous groups where each group consists of one single type. Set $\mu_j=\lam_j/\lam$ for each $j \in J$, where $\lam=\sum_{j \in J} \lam_j=n$. In this case, $\kap_j=\mu_j/(\lam \cdot \mu_j)=1$ for every $j \in J$ and $\ubar{\kap}=\min_{j\in J} \kap_j=1$, \ie each type (or group) of demand agents has a perfect representation in the arriving population and it matches exactly what we expect in the target serving ratio. Observe that our ASR-based objective \tbf{Max-A} $\max \min_{g \in \cG} \frac{\E[X_g]}{\E[A] \cdot \mu_g}$ is reduced to $\max \min_{j \in J} \frac{\E[X_j]}{\lam_j}$.
\end{example}

\begin{lemma}\label{lem:app-2}
The clairvoyant optimal has an expected performance of $\OPT \ge 1-o(1)$ on Example~\ref{exam}, where $o(1)$ is a vanishing term when $n$ approaches infinity.  
\end{lemma}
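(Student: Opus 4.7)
The plan is to exhibit one explicit clairvoyant policy on Example~\ref{exam} whose per-type expected serving ratio is $1-o(1)$ for every $j\in J$. Since $\OPT$ maximizes over all clairvoyant policies, this immediately yields $\OPT\ge 1-o(1)$. The natural policy to try is the rare-first one: for each supply agent $i=j$ with $j\in[n]$, use its single unit of capacity to serve one arrival of the rare type $j$ whenever $A_j\ge 1$, and otherwise use it to serve an arrival of the common type $j=0$ (if any). All remaining capacity among the $n$ agents is then pooled to serve the common type. Because the rare types have disjoint neighborhoods of size one, there is clearly no better way to serve them than this greedy prioritization.

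Under this policy, the analysis splits into two parts. For each rare type $j\in[n]$, $X_j=\mathbf{1}[A_j\ge 1]$, and so $\E[X_j]/\lam_j=(1-e^{-1/n})/(1/n)=n(1-e^{-1/n})=1-O(1/n)$, which is $1-o(1)$. For the common type, let $R=\sum_{j=1}^n \mathbf{1}[A_j\ge 1]$ be the (random) number of rare types that fire at least once; then the common type gets served $\min(A_0,n-R)$ times, so $\E[X_0]=\E[\min(A_0,n-R)]$ and the target is $\E[X_0]/\lam_0=\E[\min(A_0,n-R)]/(n-1)\ge 1-o(1)$.

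The main technical step, and the one I expect to be the only nontrivial calculation, is to show $\E[\min(A_0,n-R)]\ge(n-1)-o(n)$, i.e., that the ``wasted'' common arrivals $(A_0-(n-R))^+=(A_0+R-n)^+$ have expectation $o(n)$. I would proceed by second-moment control: the mean $\E[A_0+R-n]=(n-1)+n(1-e^{-1/n})-n=O(1/n)$ is essentially zero, while using independence of $A_0$ and $R$ the variance is at most $\Var(A_0)+\Var(R)\le(n-1)+n(1-e^{-1/n})\le n+O(1)$. Hence $\E[(A_0+R-n)^2]=O(n)$, and Cauchy--Schwarz gives $\E[(A_0+R-n)^+]\le\sqrt{\E[(A_0+R-n)^2]}=O(\sqrt n)=o(n)$. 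Substituting $\E[X_0]=\E[A_0]-\E[(A_0+R-n)^+]\ge(n-1)-O(\sqrt n)$ yields $\E[X_0]/\lam_0\ge 1-O(1/\sqrt n)=1-o(1)$.

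Putting the two parts together, $\min_{j\in J}\E[X_j]/\lam_j\ge 1-o(1)$ under this clairvoyant policy, which matches the reduced \objb objective on Example~\ref{exam}, giving $\OPT\ge 1-o(1)$ as required. The only place where care is needed is the second-moment bound on the ``common-type waste''; everything else is a direct expansion of Poisson means and probabilities. An alternative to Cauchy--Schwarz would be a Chernoff-type tail bound on $A_0$ plus a Markov bound on $R$, but the variance route is the cleanest since the two sources of noise decouple.
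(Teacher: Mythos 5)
Your proposal is correct and follows essentially the same route as the paper: both use the rare-first clairvoyant policy, bound each rare type's ratio by $n(1-e^{-1/n})=1-o(1)$, and bound the common type's ratio by $\E[\min(A_0,\, n-\cdot)]/(n-1)$. The only difference is that you rigorously justify the common-type step via a second-moment/Cauchy--Schwarz bound on the waste $(A_0+R-n)^+$, a calculation the paper simply asserts, so your write-up is if anything more complete.
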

\begin{proof}
Consider such an offline algorithm \ALG as follows: every time after observing the all arriving demand agents, first try to serve as many arrivals of rare types as possible and then try to serve those of the common type. Let $A_r=\sum_{j \in [n]} A_j$ be the number of total arrivals of rare types and let $A_c$ be that of the common type. 
We see that the common type will be served with a ratio at least 
$\E[\min(A_c, n-A_r)]/(n-1)=1-o(1)$, where $A_r \sim \Pois(1)$ and $A_c=\Pois(n-1)$. As for each rare type $j \in [n]$, we see it will be served with a ratio at least $\E[\min(A_j,1)]/(1/n)=\Pr[A_j \ge 1]/(1/n)=1-o(1)$, where $A_j \sim \Pois(1/n)$. Thus, we claim that \ALG achieves an ASR at least $1-o(1)$ on Example~\ref{exam}. Since $\OPT \ge \ALG$, we establish our claim.
\end{proof}

\begin{lemma}\label{lem:app-1}
The optimal values of the two benchmark LPs, \LP~\eqref{obj-1} and \LP~\eqref{obj-2}, are both equal to $1$. 
\end{lemma}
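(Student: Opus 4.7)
The plan is to prove both LP optima equal $1$ by exhibiting an explicit primal solution that attains $s=1$ together with an immediate upper bound $s^*\le 1$ that follows directly from the LP constraints. Since Example~\ref{exam} uses homogeneous groups (each $g=\{j\}$), the paper already notes that \LP~\eqref{obj-1} collapses to \LP~\eqref{obj-2} in this regime, so a single argument covers both benchmarks.

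First, I would record the instance parameters. The total arrival rate is $\lam=\sum_{j\in J}\lam_j = n\cdot(1/n)+(n-1)=n$, and by construction $\mu_j=\lam_j/\lam$ for every $j\in J$, so the demand lower bound $s\cdot\lam\cdot\mu_j$ in Constraint~\eqref{cons:j-2b} simplifies to $s\cdot\lam_j$. Combined with Constraint~\eqref{cons:j-2a}, which reads $x_j\le\lam_j$, this immediately forces $s\le 1$ in any feasible solution, yielding the upper bound $s^*\le 1$ for both LPs.

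For the matching lower bound, I would construct an explicit primal. For each rare type $j\in[n]$ the only feasible edge is $(j,j)$, so I set $x_{jj}=\lam_j=1/n$; for the common type I distribute demand uniformly by setting $x_{i,0}=(n-1)/n$ for every $i\in I$. Then $x_j=\lam_j$ for every $j\in J$, saturating both the upper bound~\eqref{cons:j-2a} and the lower bound~\eqref{cons:j-2b} at $s=1$, while on the supply side each $x_i=x_{ii}+x_{i,0}=1/n+(n-1)/n=1=b_i$ meets Constraint~\eqref{cons:i-2} with equality. There is no real obstacle here: the argument just verifies that when $\mu_j=\lam_j/\lam$ (i.e., every type is perfectly represented in the arriving population), the demand-and-supply constraints can be simultaneously saturated, with the sole arithmetic observation that each capacity $b_i=1$ exactly absorbs the rare-type share $1/n$ plus the uniform common-type share $(n-1)/n$.
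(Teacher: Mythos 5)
Your proposal is correct and follows essentially the same route as the paper: the same explicit primal solution ($x_{jj}=1/n$ for rare types, $x_{i,0}=(n-1)/n$ for the common type, saturating every capacity at $b_i=1$) for the lower bound, and the upper bound $s^*\le 1$ obtained from combining Constraints~\eqref{cons:j-2a} and~\eqref{cons:j-2b}, which is exactly what the paper's appeal to Lemma~\ref{lem:mu} (giving $s^*\le\ubar{\kap}=1$) amounts to. No gaps.
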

\begin{proof}
Since we consider homogeneous groups, the two LPs are reduced to the same. Let us focus on  \LP~\eqref{obj-2} and let $s^*$ be the optimal value. By Lemma~\ref{lem:mu}, we have $s^* \le \ubar{\kap}=1$. 

Now we show $s^* \ge 1$. Consider such a solution as follows: for each rare type $j \in [n]$, set $x_{i=j,j}=\lam_j=1/n$; for the common type $j=0$, set $x_{ij}=1-1/n$ for each $i \in [n]$; $s=1$. We can verify that each $x_j=\sum_{i \in \cN_i} x_{ij}=\lam_j$; each $x_i=\sum_{j \in \cN_i} x_{ij}=1=b_i$; each $x_j=s \cdot \lam \cdot \mu_j=\lam_j$. Thus, we claim that it is feasible and therefore $s^* \ge s=1$. 
\end{proof}

Lemma~\ref{lem:app-1} suggests that the two algorithms \alga and \algb are reduced to the same on Example~\ref{exam}. Consider such an optimal solution to the two benchmark LPs~
\eqref{obj-1} and~\eqref{obj-2} as follows: for each rare type $j \in [n]$, $x_{i=j,j}=1/n$; for the common type $j=0$, $x_{ij}=1-1/n$ for each $i \in [n]$, and $s^*=1$. We can verify that \alga and \algb run essentially in the same way on Example~\ref{exam}, both of which will choose the neighbor $i=j$ with probability $1$ if some rare type $j \in [n]$ arrives and will sample a neighbor $i \in [n]$ uniformly if the common type $j=0$ arrives.

\begin{lemma}\label{lem:app-3}
Both \alga and \algb achieve an online-competitive ratio no more than $1-1/\sfe+o(1)$ on Example~\ref{exam}.
\end{lemma}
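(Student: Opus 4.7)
The plan is to show that both \alga and \algb, which are reduced to identical behavior on Example~\ref{exam} by Lemma~\ref{lem:app-1}, achieve an ASR of exactly $1-1/\sfe$ (ignoring vanishing terms), and then combine this with the lower bound $\OPT \ge 1-o(1)$ from Lemma~\ref{lem:app-2} to conclude. The first step is to unpack what the common sampling distribution does: upon arrival of a rare type $j \in [n]$, assign it deterministically to its unique neighbor $i=j$; upon arrival of the common type $j=0$, sample a supply agent $i \in [n]$ uniformly at random. This follows from the explicit optimal LP solution $x_{i=j,j}=1/n$ and $x_{i,0}=1-1/n$ exhibited just before the statement.

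Next I would describe the process seen by a fixed supply agent $i$. By Poisson splitting, $i$ faces two independent Poisson arrival streams on $[0,1]$: one from the rare type $j=i$ at rate $1/n$, and one from the common type $j=0$ thinned by the uniform sampling probability $1/n$, which gives rate $(n-1)\cdot(1/n) = (n-1)/n$. Because $b_i=1$, supply agent $i$ serves \emph{exactly} the first arrival in the merged stream, provided it falls in $[0,1]$. The merged stream is Poisson with rate $1$, so the probability that at least one arrival occurs in $[0,1]$ is $1-\sfe^{-1}$, and conditional on an arrival, the type is rare with probability $1/n$ and common with probability $(n-1)/n$, independently of the arrival time.

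From this, the key expectations fall out directly. For any rare type $j\in[n]$,
\[
\E[X_j] = \Pr[\text{first arrival at }i=j \text{ lies in }[0,1]\text{ and is rare}] = (1-\sfe^{-1})\cdot \tfrac{1}{n},
\]
so $\E[X_j]/\lam_j = 1-\sfe^{-1}$. For the common type, summing over the $n$ agents gives $\E[X_0]= n\cdot (1-\sfe^{-1})\cdot \tfrac{n-1}{n} = (1-\sfe^{-1})(n-1)$, whence $\E[X_0]/\lam_0 = 1-\sfe^{-1}$ as well. Since $\mu_j = \lam_j/\lam$ and $\E[A]=\lam$, the ASR achieved by the algorithm equals $\min_{j\in J} \E[X_j]/\lam_j = 1-1/\sfe$.

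Finally, combining this with Lemma~\ref{lem:app-2} (which gives $\OPT \ge 1-o(1)$), the competitive ratio on Example~\ref{exam} is at most $(1-1/\sfe)/(1-o(1)) = 1-1/\sfe+o(1)$, as claimed. The only subtle step is justifying that agent $i$ really does serve only the first arrival in the merged stream; this is immediate because the assignment rule commits each arrival to $i$ unconditionally (given the sample), so the capacity is consumed at the first such arrival and every subsequent arrival is rejected. No other obstacles are anticipated, since the merged-process computation and the independence of arrival time from arrival type are standard properties of Poisson processes.
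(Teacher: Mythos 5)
Your proposal is correct and follows essentially the same route as the paper: both reduce \alga and \algb to the same sampling rule via the explicit LP solution, compute that each rare type $j$ satisfies $\E[X_j]/\lam_j = 1-1/\sfe$ because supply agent $i=j$ receives a rate-$1$ Poisson stream and has unit capacity, and then divide by the $\OPT \ge 1-o(1)$ bound of Lemma~\ref{lem:app-2}. Your merged-stream/first-arrival framing is just a repackaging of the paper's integral $\int_0^1 \Pr[\Pois(t)<1]\cdot\frac{1}{n}\,dt$, and your extra computation for the common type is harmless but unnecessary since the ASR is a minimum over types.
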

\begin{proof}
Focus on a given rare type $j \in [n]$. Let $X_j$ denote the number of agents of type $j$ that gets served in \alga (or \algb). For the supply agent $i=j$, we see that its serving capacity will be consumed with a rate equal to $x_i=\sum_{j\in \cN_i}x_{ij}=1$. Therefore,
\[
\frac{\E[X_j]}{\lam_j}=\frac{1}{\lam_j}\int_0^1 \Pr[\Pois(t)<1] \cdot \frac{1}{n} dt=1-\frac{1}{\sfe}.
\] 
Thus, we claim that the ASR achieved by \alga (or \algb) will be $\min_{j \in J} \E[X_j]/\lam_j \le 1-1/\sfe$. From Lemma~\ref{lem:app-2}, $\OPT \ge 1-o(1)$, thus we claim that both  \alga and \algb achieve an online-competitive ratio no more than $(1-1/\sfe)/(1-o(1))=1-1/\sfe+o(1)$.
\end{proof}

Observe that on Example~\ref{exam}, $\ubar{b}=\ubar{\kap}=s^*=1$. We can verify that $g(1,\ubar{b})=\ubar{\kap} \cdot g(s^*,\ubar{b})=g(1,1)=1-1/\sfe$. This suggests the tightness of online-competitive analyses of \alga and \algb.

\begin{lemma}\label{lem:app-4}
No algorithm can achieve an online-competitive ratio better than $\sqrt{3}-1$ on Example~\ref{exam}.
\end{lemma}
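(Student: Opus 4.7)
The plan is to fix an arbitrary online algorithm $\ALG$ on the instance of Example~\ref{exam} and bound its expected ASR by $\sqrt{3}-1 + o(1)$ as $n\to\infty$. First I would symmetrize: averaging $\ALG$ over a uniformly random permutation of the rare-type labels $[n]$ yields a symmetric algorithm $\ALG'$ that serves each rare type with the same expected frequency $(1/n)\sum_{j\in[n]} \E[X_j^{\ALG}]$, while leaving $\E[X_0]$ unchanged. Since $\min_{j\in[n]} n\E[X_j^{\ALG}] \le \sum_{j\in[n]} \E[X_j^{\ALG}]$, the symmetrized $\ALG'$ has ASR at least that of $\ALG$, so WLOG $\ALG$ is symmetric across the rare types.

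For such a symmetric $\ALG$, let $\theta(t)\in[0,1]$ denote the marginal probability that $\ALG$ serves a common-type arrival at time $t$, and let $u(t):=\Pr[\text{agent }i\text{ still has capacity at time }t]$, which is the same for all $i$ by symmetry. The key balance equation comes from equating the expected rate of agents being depleted, $-n\,u'(t)$, with the total expected serving rate: common arrivals contribute $(n-1)\theta(t)$ and rare arrivals contribute at most $u(t)$ (since each of the $n$ rare types arrives at rate $1/n$ and can only be served when its agent is still alive). Integrating yields $u(t) = 1 - \tfrac{n-1}{n}\Theta(t) - O(1/n) = 1 - \Theta(t) + O(1/n)$ with $\Theta(t):=\int_0^t \theta(s)\,ds$. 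Hence the common-type ratio equals exactly $\Theta(1)$, while each rare-type ratio is bounded above by $\int_0^1 u(t)\,dt = 1 - \int_0^1 \Theta(s)\,ds + O(1/n)$, so the ASR attained by $\ALG$ is at most $\min\bigl(\Theta(1),\, 1 - \int_0^1 \Theta(s)\,ds\bigr) + O(1/n)$.

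It remains to maximize $\min(c,\, 1-\int_0^1\Theta(s)\,ds)$ over nondecreasing $\Theta$ with $\Theta(0)=0$, $\Theta(1)=c\in[0,1]$, and $\theta=\Theta'\in[0,1]$. The Lipschitz-1 constraint forces $\Theta(s)\ge\max(0,\, c-(1-s))$, so $\int_0^1 \Theta(s)\,ds \ge c^2/2$ (tight by delaying all service to the last window of length $c$). The bound reduces to $\min(c,\, 1-c^2/2)$; equating the two branches gives $c^2+2c-2=0$, i.e., $c^*=\sqrt{3}-1$, with common value $\sqrt{3}-1$. Combined with Lemma~\ref{lem:app-2} ($\OPT\ge 1-o(1)$), this yields an online competitive ratio of at most $\sqrt{3}-1+o(1)$, and letting $n\to\infty$ completes the proof. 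I expect the main obstacle to be rigorously justifying the balance equation for a fully adaptive, history-dependent $\ALG$ — in particular, showing that the rare-type contribution to capacity decay is uniformly $O(1/n)$ regardless of how $\ALG$ adaptively balances serving commons versus preserving capacity for rares, so that the ``always serve rare'' assumption can be made at no asymptotic cost.
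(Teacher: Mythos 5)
Your proposal is correct, and it takes a genuinely different route from the paper's proof. The paper first argues (informally, via an exchange-style remark that rejecting the common type is ``better done sooner'') that one may restrict attention to threshold policies $\ALG(\tau)$ that reject all common arrivals before time $\tau$ and accept them afterwards following a random order of supply agents; it then bounds each rare type's ratio through the Erlang arrival time $H_\ell$ of the $\ell$th common arrival, averages over $\ell$, and obtains $\min(\tau+\tfrac12-\tfrac12\tau^2,\,1-\tau)$, maximized at $\tau=2-\sqrt{3}$. You instead keep the algorithm fully general and adaptive: after symmetrizing over rare labels (or, equivalently, just using $\min\le$ average), you encode the algorithm solely through its marginal common-acceptance profile $\Theta(t)=\int_0^t\theta(s)\,ds$ with $\theta\in[0,1]$, use the capacity-conservation identity (with $b_i=1$, the number of dead agents equals the number served, so $u(t)\le 1-\tfrac{n-1}{n}\Theta(t)$), and solve the variational problem $\max_\Theta\min\bigl(\Theta(1),\,1-\int_0^1\Theta\bigr)$ under the Lipschitz-$1$ constraint, which collapses to the same quadratic $\min(c,\,1-c^2/2)$ with optimum $\sqrt{3}-1$; substituting $c=1-\tau$ recovers exactly the paper's two bounds. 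Your approach buys rigor: it needs no ``WLOG threshold policy'' reduction, which is the least justified step in the paper's argument, while the paper's computation is more concrete. Also, the obstacle you flag at the end is not actually an obstacle: for the upper bound on $u(t)$ you may simply drop the (nonnegative) rare-served term from the conservation identity, and in any case the expected number of rare arrivals on $[0,1]$ is at most $1$, i.e., an $O(1/n)$ fraction of total capacity uniformly over all adaptive algorithms; combining with Lemma~\ref{lem:app-2} and letting $n\to\infty$ in Example~\ref{exam} then finishes the proof exactly as you state.
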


\begin{proof}
Note that any online algorithm which is going to reject the common type is better off doing so sooner rather than later, since an earlier rejection allows more time to observe which rare types arrive, and give those types priority.
For any $\tau\in[0,1]$, suppose that an algorithm, denoted by $\ALG(\tau)$, starts accepting the common type $j=0$ after time $\tau$. Recall from Example~\ref{exam} that any algorithm must have some (possibly randomized) order of supply agents to use when it wants to serve the common type. 

Let $\sig$ be any given random order on $[n]$ and $\ALG(\tau)$ will follow the order of $\sig$ to assign supply agents to serve the arriving common types after time $\tau$. For each $\ell \in [n]$, let $j=\sig(\ell)$ and $H_\ell$ be the arrival time of the $\ell$th common arrivals in a Poisson process with arrival rate $n-1$. Let $H=\min(\tau+H_\ell,1)$. We observe that the rare type of $j$ should arrive at least once during the time $ [0,H]$ such that it can be served by the demand agent $i=j$. Note that $H_\ell \sim \mathrm{Er}(\ell,n-1)$, an Erlang distribution with parameters $\ell$ and $n-1$, and $\E[H_\ell]=\ell/(n-1)$. Therefore, $\E[H] \le \min(\tau+\ell/(n-1),1)$. 
Let $X_j$ be the number of the rare type $j$ served and $H=\min(H_\ell,1)$
\[
\E[X_j]=\E_{H}\Big[1-\exp\Big(-H/n\Big) \Big] \le \E[H]/n \le \frac{1}{n} \min\Big(\tau+\ell/(n-1),1\Big).
\]
Thus, among all rare types, we have
\[
\min_{j \in [n]} \frac{\E[X_j]}{\lam_j}\le \frac{1}{n}\sum_{j=1}^n \frac{\E[X_j]}{\lam_j}=\sum_{j=1}^n \E[X_j] \le \frac{1}{n} \sum_{\ell=1}^n \min\Big(\tau+\ell/(n-1),1\Big)=
\int_0^1 \min(\tau+z,1) dz+o(1)=\Big(\tau+\frac{1}{2}-\frac{1}{2}\tau^2\Big)+o(1),
\] 
where $o(1)$ is a vanishing term when $n \rightarrow \infty$. Therefore, even using a randomized order, there must exist a rare type whose ASR will be at most $(\tau+\frac{1}{2}-\frac{1}{2}\tau^2+o(1))$. Meanwhile, for any $\tau \in [0,1]$, the expected number of common types served can be at most $(n-1)(1-\tau)$, which suggest that the ASR on the common type should be at most $1-\tau$. Thus, the ASR achieved by $\ALG(\tau)$ cannot exceed $\min \big(\tau+\frac{1}{2}-\frac{1}{2}\tau^2,1-\tau\big)$ when $n$ is sufficiently large. We can verify that the fairness is maximized at $\tau=2-\sqrt{3}$, in which case it equals $\sqrt{3}-1$. Note that the best clairvoyant algorithm have an expected performance at least $1-o(1)$. This completes the proof.
\end{proof}

\section{Experiments on homogeneous groups}

\xhdr{Setup of the synthetic dataset}.
We first generate $500$ supply agents and $500$ demand agents, respectively. The edge set $E$ is generated from a random graph with an average degree of $10$ for each demand agent. For each supply agent $i$, we set a uniform serving capacity $b = 5$, thus we have $\ubar{b} = 5$. For each demand agent $j$, we always adjust $\lambda_j$ such that $\sum_{j\in J}\lambda_j/\lambda = 1$. Then, we can get a target $\ubar{\kappa}$ by setting $\mu_j$ as $\mu_j = \frac{\lambda_j}{\lambda \cdot \kappa_j}$ for all $j \in J$, where $\kappa_j$ is a value (with up to one decimal) sampled uniformly at random from $[\ubar{\kappa},2-\ubar{\kappa}]$.
In addition, we focus on a practical case where resources are highly scarce with supply scarcity $\rho=2$.

\begin{figure}[ht!]
    \centering
    \includegraphics[scale=0.4]{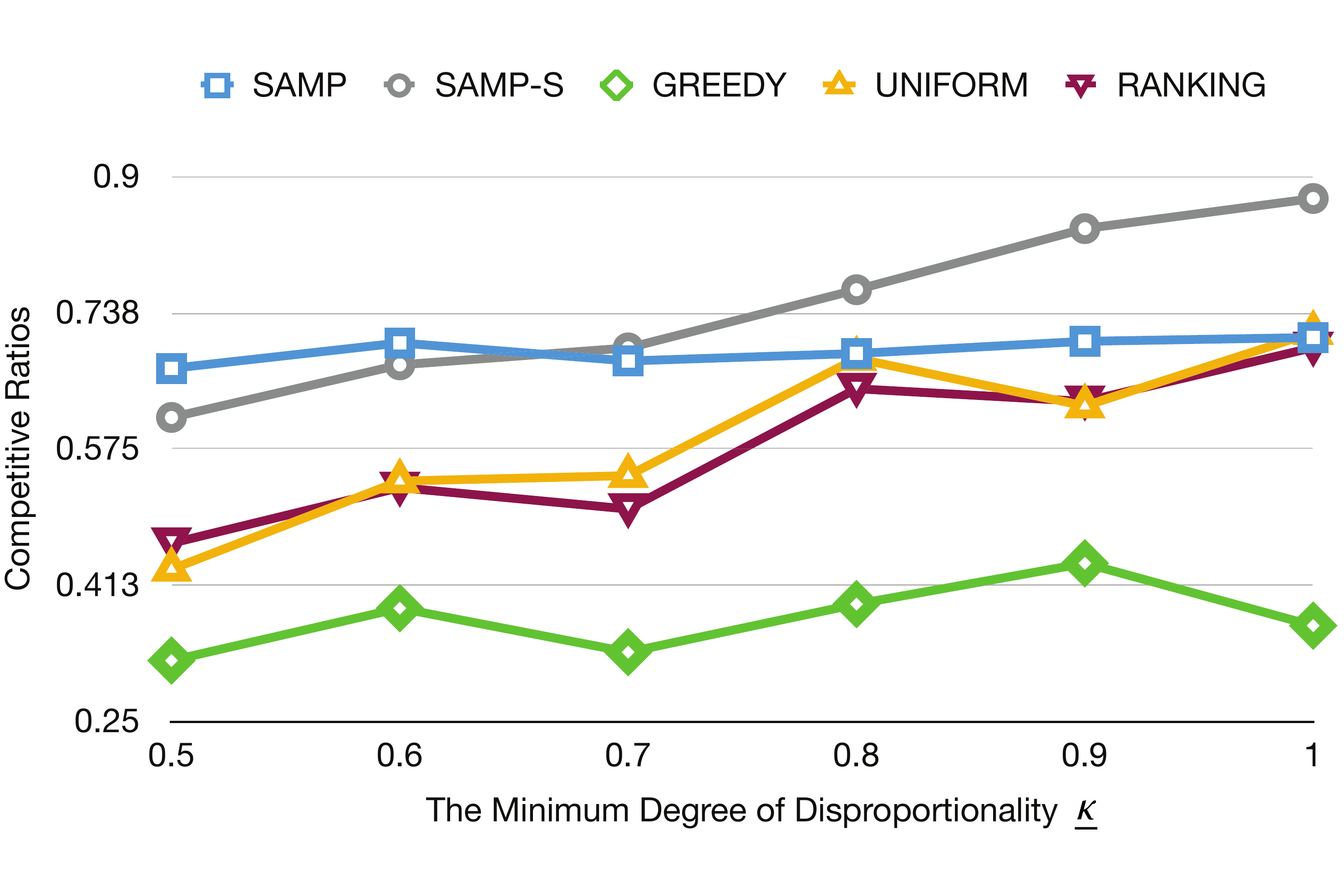}
    \caption{The synthetic dataset: competitive ratios achieved with $\ubar{\kap} \in \{0.5, 0.6, 0.7, 0.8, 0.9,1\}$.}
    \label{fig:syn_cr_kappa}
\end{figure}

\xhdr{Results and discussions}. In addition to the four algorithms, \ie  \alga, \alggre, \alguniform and \algrank, we also run \algb on the synthetic dataset, which is designed for the setting of homogeneous groups.
Overall, Figure~\ref{fig:syn_cr_kappa} shows that as $\ubar{\kappa}$ increases, the ratio of \algb will increase. When $\ubar{\kappa} \ge 0.6$, \algb starts to outperform \alga. This is consistent with results in Theorem~\ref{thm:main-2}. 




	\end{document}